\def\BibTeX{{\rm B\kern-.05em{\sc i\kern-.025em b}\kern-.08em
    T\kern-.1667em\lower.7ex\hbox{E}\kern-.125emX}}
\definecolor{mygreen}{RGB}{23,109,22}
\definecolor{mygray}{RGB}{121,121,121}
\definecolor{mypurple}{RGB}{141,75,141}
\definecolor{mytomato}{RGB}{171,67,62}
 \NewDocumentCommand{\TODO}{m m}%
   {{\bfseries\color{#1}[#2]}}%
 \RenewDocumentCommand{\TODO}{m m}{}
\newcommand{\lang}{\m{SINTEGRITY}}
\newcommand{\maxsec}[1]{\textcolor{BrickRed}{#1}}
\newcommand{\runsec}[1]{\textcolor{OliveGreen}{#1}}
\Crefname{section}{\S\!}{\S}
\Crefname{figure}{Fig.}{Figs.}
\Crefname{tabular}{Tab.}{Tabs.}
\Crefname{theorem}{Thm.}{Thms.}
\Crefname{lemma}{Lem.}{Lems.}
\Crefname{definition}{Def.}{Defs.}
\newcommand{\eg}{e.g.,~}
\newcommand{\ie}{i.e.,~}
\newcommand{\resp}{resp.,~}
\newcommand{\respb}{resp}
\newcommand{\mc}[1]{\mathcal{#1}}
\newcommand{\mi}[1]{\mathit{#1}}
\newcommand{\m}[1]{\mathsf{#1}}
\newcommand{\mb}[1]{\mathbf{#1}}
\newcommand{\alt}{\,\mid\,}
\newcommand{\defined}{\,\triangleq\,}
\newcommand{\D}{\Delta}
\newcommand{\Sig}{\Sigma}
\newcommand{\intchoicesymb}{\oplus}
\newcommand{\extchoicesymb}{\&}
\newcommand{\chanoutsymb}{\otimes}
\newcommand{\chaninsymb}{\multimap}
\newcommand{\intchoice}[1]{\oplus\{\ell{:}#1_{\ell}\}_{\ell \in L}}
\newcommand{\extchoice}[1]{\&\{\ell{:}#1_{\ell}\}_{\ell \in L}}
\newcommand{\chanout}[2]{#1 \otimes #2}
\newcommand{\chanin}[2]{#1 \multimap #2}
\newcommand{\one}{\mb{1}}
\newcommand{\dlabel}[2]{\langle #1, #2 \rangle}
\newcommand{\dlabelr}[2]{{\color{red}\langle #1, #2 \rangle}}
\newcommand{\dlabelb}[2]{\langle #1, #2 \rangle}
\newcommand{\ptypr}[8]{{\color{red}\Psi}; #1 \vdash_{\Sig} (#2) {{\color{red}@} \dlabelr{#3}{#4}} :: #5 {:} #6 \dlabelr{#7}{#8} }
\newcommand{\ptypb}[8]{\Psi; #1 \vdash_{\Sig} #2 @ \dlabel{#3}{#4} :: #5 {:} #6 \dlabel{#7}{#8} }
\newcommand{\ptypbr}[8]{{\color{red}\Psi}; #1 \vdash_{\Sig} #2 {\color{red}@} \dlabelr{#3}{#4} :: #5 {:} #6 \dlabelr{#7}{#8} }
\newcommand{\latcstr}[1]{\Psi \Vdash #1}
\newcommand{\latcstrnot}[1]{\color{mytomato}{\Psi \not\Vdash #1}}
\newcommand{\barbeq}[4]{\Psi \vDash #1 \sim_{\dlabelb{#2}{#3}} #4}
\newcommand{\outact}[4]{\uparrow_{#1^{\dlabelb{#2}{#3}}}.#4}
\newcommand{\subst}[3]{[#1 / #2] #3}
\newcommand{\substmap}[3]{#1 \Vdash #2 : #3}
\newcommand{\substmapap}[2]{\hat{#1}(#2)}
\newcommand{\typdef}[2]{#1 = #2}
\newcommand{\typwfmd}[1]{\Vdash_{\Sig} #1\; \mb{wfmd}}
\newcommand{\procdefr}[9]{{\it \color{red}\Psi'}; #1 \vdash_{\Sig} #2 = #3 {\it\color{red}@} \dlabelr{#4}{#5} :: #6 {:} #7 \dlabelr{#8}{#9}}
\newcommand{\procdefbsr}[9]{{\it \color{red}\Psi}; #1 \vdash #2 = #3 {\it \color{red}@} \dlabelr{#4}{#5} :: #6 {:} #7 \dlabelr{#8}{#9}}
\newcommand{\sigwfmd}[1]{\Vdash_{\Sigma; {\it \color{red}\Psi_0}} #1 \; \mathbf{sig}}
\newcommand{\labsnd}[3]{#1.#2; #3}
\newcommand{\labrcv}[2]{\mb{case}\, #1 (\ell {\Rightarrow} #2_{\ell})_{\ell \in L}}
\newcommand{\labrcvb}[2]{\mb{case}\, #1 (\ell {\Rightarrow} #2_{\ell})_{\ell \in I}}
\newcommand{\chnsnd}[3]{\mb{send}\, #1\, #2; #3}
\newcommand{\chnrcv}[3]{#1 {\leftarrow} \mb{recv}\,#2; #3_{#1}}
\newcommand{\cls}[1]{\mb{close}\, #1}
\newcommand{\wait}[2]{\mb{wait}\,#1; #2}
\newcommand{\spwn}[9][\gamma]{(#2^{\dlabel{#3}{#4}} \leftarrow #5 [#1] \leftarrow #6)@\dlabel{#7}{#8}; #9_{#2}}
\newcommand{\spwnb}[9][\gamma]{(#2^{\dlabelb{#3}{#4}} \leftarrow #5 [#1] \leftarrow #6)@\dlabelb{#7}{#8}; #9_{#2}}
\newcommand{\fwd}[2]{#1 \leftarrow #2}
\newcommand{\mnode}{\color{black}{\mathcal{D}}}
\newcommand{\msc}[1]{\mbox{\sc #1}}
\newcommand{\ostep}{\textcolor{OliveGreen}{\;\; \mapsto \;\;}} 
\newcommand{\ostepb}{\textcolor{OliveGreen}{\mapsto \;\;}} 
\def\lst@SkipToFirst{%
    \lst@ifmatchrangestart\c@lstnumber=\numexpr-1+\lst@firstline\fi
    \ifnum \lst@lineno<\lst@firstline
        \def\lst@next{\lst@BeginDropInput\lst@Pmode
        \lst@Let{13}\lst@MSkipToFirst
        \lst@Let{10}\lst@MSkipToFirst}%
        \expandafter\lst@next
    \else
        \expandafter\lst@BOLGobble
    \fi}
\lstdefinelanguage{sintegrity}{
    language=C++,
    morestring=[b]",
    morecomment=[s]{(*}{*)},
    morekeywords=[1]{
        secrecy, theory, signature, exec, end,
        type, stype, session,
        send, forward, select, case, receive, instantiate, spawn,
        wait, close,
        proc, provide, using, at, to,
        unit, &, *, -o, +, |_|
    },
    morekeywords=[2]{
        \#bank, \#alice, \#bob, \#guest
    },
    otherkeywords={
        \#bank, \#alice, \#bob, \#guest,
        &, *, -o, +, |_|
    },
}
\definecolor{background_color}{RGB}{240, 240, 240}
\definecolor{string_color}    {RGB}{180, 156,   0}
\definecolor{keyword_color}   {RGB}{ 64, 100, 255}
\definecolor{lattice_color}   {RGB}{180, 156,   0}
\definecolor{comment_color}   {RGB}{  0, 117, 110}
\definecolor{number_color}    {RGB}{ 84,  84,  84}
\ttfamily\color{number_color},
\newrobustcmd{\code}[2][]{{\sloppy
\ifmmode
    \text{\colorbox{background_color}{\lstinline[language=sintegrity,#1]`#2`}}
\else
    {\colorbox{background_color}{\lstinline[language=sintegrity,#1]`#2`}}%
\fi}}
\newenvironment{synchart*}[1]%
{\begin{equation*}
      \label{syn:#1}
      
      \begin{array}{llc@{\quad\extracolsep{\fill}}lll}
        \textit{Sort} & & & \textit{Abstract} & \textit{Concrete} & \textit{Description} \\
}
{\end{array}\end{equation*}\ignorespacesafterend}
\title{Regrading Policies for Flexible Information Flow Control in Session-Typed Concurrency}
\author{Farzaneh Derakhshan}{
  Illinois Institutie of Technology, USA
}{
  fderakhshan@iit.edu
}{
  https://orcid.org/0000-0002-2156-2606
}{
}
\author{Stephanie Balzer}{
  Carnegie Mellon University, USA
}{
  balzers@cs.cmu.edu
}{
  https://orcid.org/0000-0002-8347-3529
}{Supported in part by the Air Force Office of Scientific Research under award number FA9550-21-1-0385 (Tristan Nguyen, program manager).
Any opinions, findings and conclusions or recommendations expressed here are those of the author(s) and do not necessarily reflect the views of the U.S. Department
of Defense.
}
\author{Yue Yao}{Carnegie Mellon University, USA}{yueyao@cs.cmu.edu}{https://orcid.org/0000-0001-8523-5156}{}
\authorrunning{Derakhshan, Balzer, and Yao} 
\keywords{Regrading policies, session types, progress-sensitive noninterference} 
\begin{document}

\maketitle

\thispagestyle{plain}
\pagestyle{plain}

\begin{abstract}
Noninterference guarantees that an attacker cannot infer secrets by interacting with a program.
Information flow control (IFC) type systems assert noninterference by tracking the level of information learned (\textsf{pc}) and
disallowing communication to entities of lesser or unrelated level than the \textsf{pc}.
Control flow constructs such as loops are at odds with this pattern because they necessitate downgrading
the \textsf{pc} upon recursion to be practical.
In a concurrent setting, however, downgrading is not generally safe.
This paper utilizes \emph{session types} to track the flow of information
and contributes an IFC type system for message-passing concurrent processes
that allows downgrading the pc upon recursion.
To make downgrading safe, the paper introduces \emph{regrading policies}.
Regrading policies are expressed in terms of integrity labels,
which are also key to safe composition of entities with different regrading policies.
The paper develops the type system and proves \emph{progress-sensitive noninterference} for well-typed processes,
ruling out timing attacks that exploit the relative order of messages.
The type system has been implemented in a type checker,
which supports security-polymorphic processes.

\end{abstract}

\section{Introduction}
\label{sec:intro}
With the emergence of new applications, such as Internet of Things and cloud computing,
today's software landscape has become increasingly \emph{concurrent}.
A dominant computation model adopted by such applications is \emph{message passing},
where several concurrently running processes connected by channels exchange messages.
A further common aspect is the need for security,
ensuring that confidential information is not leaked to a (malevolent) observer.

\emph{Information flow control} (IFC) type systems
\cite{VolpanoARTICLE1996,SmithVolpanoPOPL1998,SabelfeldIEE2003}
rule out information leakage by type checking.
These systems statically track the level of information learned by an entity and
disallow propagation to parties of lesser or unrelated levels, given a security lattice.
The ultimate property to be asserted by an IFC type system is noninterference,
a program equivalence statement up to the confidentiality level of an observer.
The gold standard is \emph{progress-sensitive noninterference} (PSNI) \cite{HeidinSabelfeldMartkoberdorf2011},
which treats divergence as an observable outcome.
PSNI thus only equates a divergent program with another diverging one,
whereas \emph{progress-insensitive noninterference} (PINI) regards divergence to be equal to any outcome.
Especially in a concurrent setting, PSNI is a sine qua non
because the \emph{termination channel} \cite{SabelfeldIEE2003} can be scaled to many parallel computations,
each leaking ``just'' one bit~\cite{StefanICFP2012,AlpernasOOPSLA2018}.

Guaranteeing PSNI, or even PINI for that matter,
can become both a blessing and a curse in a concurrent setting.
To ensure such a strong property, IFC type systems have to be very restrictive.
The troublemakers, in particular, are control flow constructs, such as loops and if statements.
Whereas IFC type systems for sequential languages allow the \textsf{pc} label\footnote{{The pc (program counter) label approximates the level of confidential information learned up to the current execution point.}}
to be lowered to its previous level for the continuation of a control flow construct,
even if the construct itself runs at high,
this treatment is no longer safe in a concurrent setting \cite{SmithVolpanoPOPL1998}.
To uphold noninterference, IFC type systems for concurrent languages typically forbid high-security loop guards
and may even put restrictions on if statements, depending on thread scheduling and attacker model
\cite{SmithVolpanoPOPL1998,SabelfeldSandsCSFW2000,SabelfeldMantelSAS2002}.

The use of linearity provides some relief
\cite{ZdancewicMyersESOP2001,ZdancewicMyersARTICLE2002,ZdancewicMyersCSFW2003,DerakhshanLICS2021,BalzerARXIV2023},
allowing high-security loop guards.
Linearity also facilitates race freedom, key to guaranteeing observational determinism
and, thus, the absence of certain timing attacks
\cite{ZdancewicMyersCSFW2003,DerakhshanLICS2021,BalzerARXIV2023}.
A family of concurrent languages that employ linearity are \emph{session types}
\cite{HondaCONCUR1993,HondaESOP1998,HondaPOPL2008,CairesCONCUR2010,WadlerICFP2012}.
Session types are used for message-passing concurrency,
typically in the context of process calculi, where concurrently running processes communicate along channels.
A distinguishing characteristic of session types is their ability to assert \emph{protocol adherence}.
A session-typed channel prescribes not only the types of values that can be transported over the channel
but also their relative sequencing.

In this paper, we develop a flow-sensitive IFC session type system that not only supports {recursive processes with arbitrary recursion guards, including high-security ones,}
but also identifies synchronization patterns that make it safe for the {process body} to
downgrade to the initial \textsf{pc} level upon recursion.
We refer to this adjustment of confidentiality level as \emph{regrading}.
To enforce the safety of regrading, we complement confidentiality with \emph{integrity} \cite{BibaTR1977}.
Integrity allows prescribing a process a \emph{regrading policy},
ensuring that any confidential information learned during the high-security parts of the loop
cannot be rolled forward to the next iteration.
Processes are \emph{polymorphic} in the confidentiality and integrity labels,
ensuring maximal flexibility of the IFC type system.

We contribute a type checker for our IFC type system, yielding the language $\lang$.
The type checker supports security-polymorphic processes using local \emph{security theories}.
Well-typed processes in $\lang$ enjoy PSNI.
To prove this result, we develop a \emph{logical relation} for integrity,
showing that well-typed processes are self-related (fundamental theorem, \Cref{lem:reflexivity}).
We then prove that the logical relation is closed under parallel composition and
that related processes are bisimilar (adequacy theorem, \Cref{thm:adeq}).

Regrading is related to robust declassification~\cite{ZdancewicMyersCSFW2001,ZdancewicARTICLE2002,ZdancewicMFPS2003,MyersARTICLE2006,ChongMyersCSFW2006,AskarovMyersARTICLE2011}, as both allow downgrading the \textsf{pc} using integrity.
In contrast to declassification, which deliberately releases information and thus intentionally weakens noninterference,
regrading preserves noninterference.
The distinction also manifests in how integrity is used.
Whereas integrity is used in robust declassification to convey how trustworthy the information is on which a regrading decision is based,
integrity in our work is used to impose extra synchronization policies on regrading processes to prevent leakage by downgrading the \textsf{pc} upon recursion.
As such, regrading constitutes a more permissive IFC mechanism.

\textbf{Contributions:}
\begin{itemize}
\item The notion of a regrading policy to downgrade a process' confidentiality, retaining PSNI.
\item The language $\lang$, a flow-sensitive IFC session type system for asynchronous message-passing with confidentiality and integrity to support regrading policies.
\item A logical relation for integrity to prove that $\lang$ processes satisfy PSNI.
\item A type checker for $\lang$, available as an artifact.
\end{itemize}

The complete formalization with proofs is available in the appendix.

\section{Motivating example and background}
\label{sec:motivating-examples}
This section provides an introduction to session-typed programming and IFC control
based on a running example.
Our language $\lang$ is an \emph{intuitionistic linear session type} language
\cite{CairesCONCUR2010,ToninhoESOP2013};
thus, our presentation is specific to that family of session types.

We use a simple bank survey as an example.
The survey is carried out by an analyzer at a bank to decide whether to buy or sell a share of stock.
The analyzer's decision depends on the opinion of two groups of participants, queried by two surveyors,
and a strategy provided by a tactician.
For simplicity, we assume that each group of participants only consists of one participant,
and the surveyors simply pass the opinion of their participant to the analyzer.

\begin{figure*}[ht]
\centering
\includegraphics[width=1\textwidth]{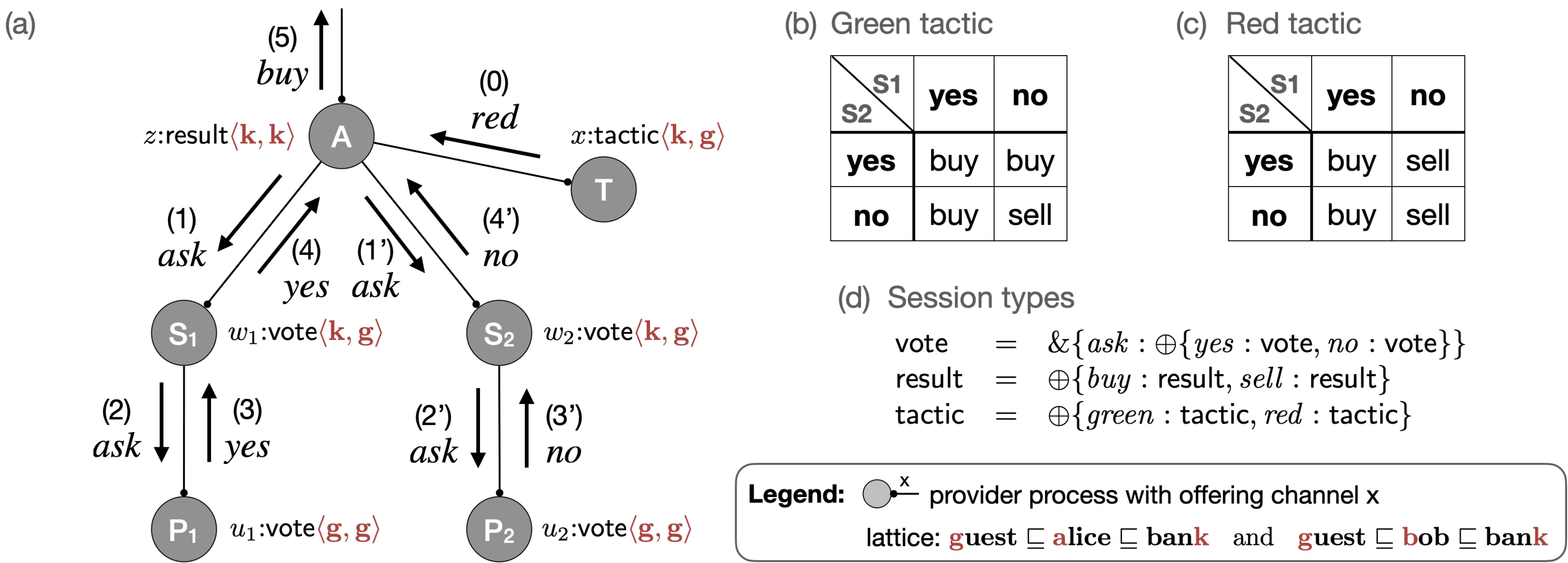}
\vspace{-5pt}
\caption{Bank survey: (a) process configuration, (b)/(c) red/green tactic, (d) session types.}
\label{fig:analyzer}
\vspace{-10pt}
\end{figure*}

A runtime configuration of processes for this example is shown in \Cref{fig:analyzer}(a):
the analyzer process {\small$\m{A}$}, the tactician process $\m{T}$, the surveyor processes {\small$\m{S_1}$} and {\small$\m{S_2}$},
along with their participant processes {\small$\m{P_1}$} and {\small$\m{P_2}$}, \respb.
The processes are connected by the channels {\small$u_1$}, {\small$u_2$}, {\small$w_1$}, {\small$w_2$}, {\small$x$}, and {\small$z$}.
The figure shows the communications between the analyzer, surveyors, and participants along these channels, with arrows indicating the message being exchanged.
The analyzer sends the message {\small$\mi{ask}$} to surveyor {\small$\m{S_1}$} to request a poll (1).
Surveyor {\small$\m{S_1}$} then sends the message {\small$\mi{ask}$} to participant {\small$\m{P_1}$} to get their opinion about buying a share (2).
Once the surveyor receives {\small$\m{P_1}$}'s vote (\ie either {\small$yes$} or {\small$no$}) (3),
it relays the vote back to the analyzer (4).
The analogous communication pattern is repeated between the analyzer and surveyor {\small$\m{S_2}$} and participant {\small$\m{P_2}$} (1'--4').
The final decision whether to {\small$\mi{buy}$} or {\small$\mi{sell}$} (5) of the analyzer is based on the tactic provided by the tactician.
For simplicity, we assume that the tactician chooses either a {\small$\mi{green}$} or {\small$\mi{red}$} tactic (0). 
In the green tactic, the analyzer decides to buy the share if at least one of the surveyors votes yes. 
In the red tactic, the analyzer buys the stock if the first surveyor votes to buy,
regardless of the opinion of the second one (see \Cref{fig:analyzer}(b-c)). 

The protocols for these communications can be specified by the session types shown in \Cref{fig:analyzer}(d),
using the connectives of \Cref{tab:session_types}.
The connectives are drawn from intuitionistic linear logic and obey the following grammar:

{\small\(A,B::= {\oplus}\{\ell : A_\ell\}_{\ell \in L} \mid {\&}\{\ell : A_\ell\}_{\ell \in L}  \mid A \otimes B \mid A \multimap B \mid 1 \mid Y,\)}

\noindent 
where {\small$L$} ranges over finite sets of labels denoted by {\small$\ell$} and {\small$k$}, {amounting to primitive values in our system.}
Type variable {\small$Y$} is a fixed point whose definition {\small$Y=A$} is given in a global signature {\small$\Sigma$}.
The latter is used to define general recursive types.
Recursive types must be \emph{contractive}~\cite{GayHoleARTICLE2005}, demanding a message exchange before recurring,
and \emph{equi-recursive}~\cite{CraryPLDI1999},
avoiding explicit (un)fold messages and relating types up to their unfolding.
All three types {\small$\m{vote}$}, {\small$\m{result}$}, and {\small$\m{tactic}$} are recursive.

\begin{table*}
\centering
\caption{$\lang$ constructs. Upper half: types and terms (\textbf{b}efore and \textbf{a}fter exchange), operational meaning, and \textbf{p}olarity.
Lower half: spawn and forward terms and operational meaning.}
\label{tab:session_types}
\vspace{-3pt}
\begin{small}
    \begin{tabular}{@{}lllllr@{}}
    \toprule
    \multicolumn{2}{@{}l}{\textbf{Session type (b/a)}} &
    \multicolumn{2}{l}{\textbf{Process term (b/a)}} &
    \textbf{Description}  \\[3pt]
    $x {:} \intchoice{A}$ & $x {:} A_k$ & $\labsnd{x}{k}{P}$ &
    $P$ & provider sends label $k$ along $x$, continues with $P$  \\
     & & $\labrcv{x}{Q}$ & $Q_k$ &
    \multicolumn{2}{l}{client receives label $k$ along $x$, continues with $Q_k$} \\[2pt]
    $x {:} \extchoice{A}$ & $x {:} A_k$ & $\labrcv{x}{P}$ &
    $P_k$ & provider receives label $k$ along $x$, continues with $P_k$  \\
     & & $\labsnd{x}{k}{Q}$ & $Q$ & \multicolumn{2}{l}{client sends label $k$ along $x$, continues with $Q$}  \\[2pt]
    $x {:} \chanout{A}{B}$ & $x {:} B$ & $\chnsnd{y}{x}{P}$ &
    $P$ & provider sends channel $y {:} A$ along $x$, continues with $P$   \\
     & & $\chnrcv{z}{x}{Q}$ & ${Q_y}$ &
    \multicolumn{2}{l}{client receives channel $y {:} A$ along $x$, continues with $Q_y$}  \\[2pt]
    $x {:} \chanin{A}{B}$ & $x {:} B$ & $\chnrcv{z}{x}{P}$ & ${P_y}$ &
    provider receives channel $y {:} A$ along $x$, continues with $P_y$  \\
     & & $\chnsnd{y}{x}{Q}$ & $Q$ &
    \multicolumn{2}{l}{client sends channel $y {:} A$ along $x$, continues with $Q$}  \\[2pt]
    $x {:} \one$ & - & $\cls{x}$ &
    - & provider sends ``$\m{close}$'' along $x$ and terminates\\
     & & $\wait{x}{Q}$ & $Q$ & \multicolumn{2}{l}{client receives ``$\m{close}$'' along $x$, continues with Q} \\[2pt]
    $x:Y$ & $x:A$ & - & - & recursive type definition $Y = A$ ($Y$ occurs in $A$) \\[3pt]
    \multicolumn{6}{@{}l}{\textbf{Judgmental rules}}\\[3pt]
    \multicolumn{4}{@{}l}{$\spwn{x}{c}{e}{X}{\D}{c_0}{e_0}{Q}$} & \multicolumn{2}{l}{spawn $X$ along $x^{\dlabel{c}{e}}$ with arguments $\D$, substitution $\gamma$,} \\
    & & & & \multicolumn{2}{l}{and running security $\dlabel{c_0}{e_0}$, then continue with $Q_x$}  \\[2pt]
    \multicolumn{4}{@{}l}{$\fwd{x}{y}$} & \multicolumn{2}{l}{forward $x {:} A$ to $y {:} A$}\\
    \bottomrule

\end{tabular}
\vspace{-15pt}
\end{small}
\end{table*}

\Cref{tab:session_types} provides an overview of $\lang$ types and terms.
A crucial characteristic of session-typed processes is that
a process \emph{changes} its type along with the messages it exchanges.
A process' type therefore always reflects the current protocol state.
\Cref{tab:session_types} lists state transitions caused by a message exchange in columns 1 and 2 with
corresponding process terms in columns 3 and 4.
Column 5 describes the computational behavior of a type.

Linearity ensures that every channel connects exactly two processes, thus imposing a \emph{tree} structure on a configuration of processes, as witnessed by \Cref{fig:analyzer}(a). 
We adopt a form of session types corresponding with intuitionistic linear logic, which moreover
introduces a distinction between the two processes connected by a channel, identifying one as the \emph{parent} and the other as the \emph{child}, turning the configuration into a \emph{rooted tree}.
The parent and child processes have mutually dual perspectives on the protocol of their connecting channel:
The child has the perspective of the \emph{provider} and the parent that of a \emph{client}.
Column 5 of \Cref{tab:session_types} describes the perspective of the client and provider for each type.
We assign a polarity to each session type which determines whether the type has a sending semantics or a receiving semantics.
%
For positive types, the provider sends, and the client receives; for negative types, the provider receives, and the client sends.
The types with positive polarity are $\intchoice{A}$, $\chanout{A}{B}$, and $1$, and the types with negative polarity are $\extchoice{A}$ and $\chanin{A}{B}$.

Each process in a configuration is uniquely identified by the channel that connects it to its parent, which we also refer to as its \emph{offering} (or \emph{providing}) channel. We consider the session type of a process to be the protocol of its offering channel.
For example, the participant process {\small$\m{P_1}$} in \Cref{fig:analyzer} has type {\small$\m{vote}$},
which is also the type of the process' offering channel {\small$u_1$} that connects {\small$\m{P_1}$} to its client {\small$\m{S_1}$}.
We say that the client {\small$\m{S_1}$} \emph{uses} the channel {\small$u_1$}.

The connectives {\small$\chanoutsymb$} and {\small$\chaninsymb$}, not used in the example,
allow sending channels along channels.
Such higher-order channels change the connectivity structure of a configuration: from the perspective of the provider, 
{\small$\chanoutsymb$} turns a child into a sibling and {\small$\chaninsymb$} a sibling into a child. The former is achieved by sending a subtree to the parent and the latter by receiving a subtree from the parent.
\Cref{sec:bank-examples} showcases an example that uses higher-order channels.

It is now time to explore how to implement the processes of our bank survey example.
\Cref{fig:survey-impl} gives the process definitions of the analyzer and surveyor.
A process definition consists of the process signature (first two lines) and body (after {\small$=$}).
The first line indicates the typing of channel variables used by the process (left of {\small$\vdash$}) and the type of the providing channel variable (right of {\small$\vdash$}).
The former are going to be child nodes of the process.
The second line binds the channel variables.
In {\small$\lang$}, {\small$\leftarrow$} generally denotes variable bindings. 
The channels and the process definitions  are annotated with confidentiality and integrity levels (\eg {\small$\maxsec{\dlabel{\mb{bank}}{\mb{guest}}}$} and {\small$\runsec{@\dlabel{\mb{guest}}{\mb{guest}}}$}).
We will later describe the meaning of these annotations; the reader can safely ignore them for now.

\begin{figure*}
\centering
\begin{small}
\input{figs/survey-impl.tex}
\end{small}
\caption{Secure process implementations of analyzer and surveyor (see \Cref{fig:analyzer}), accepted by $\lang$,
but rejected by existing IFC session type systems.}
\vspace{-5pt}
\label{fig:survey-impl}
\vspace{-12pt}
\end{figure*}

The analyzer first waits to receive a tactic from the tactician along channel {\small$x$}.
In either branch (\ie {\small$\mi{green}$} or {\small$\mi{red}$}), the analyzer proceeds by requesting a vote from surveyors {\small$\m{S_1}$} and {\small$\m{S_2}$},
after which it communicates its decision along its offering channel {\small$z$} before recurring.
We remark that the notation {\small$z\leftarrow \m{A}\leftarrow w_1, w_2, x$} used for a tail call does not precisely match up with \Cref{tab:session_types}
because we are deferring a discussion of security annotations
and substitutions for security-polymorphic processes to \Cref{sec:term-typing}.
Moreover, a tail call is syntactic sugar for a spawn combined with a forward;
\ie {\small$z\leftarrow \m{A}\leftarrow w_1, w_2, x$} desugars to {\small$z'\leftarrow \m{A}\leftarrow w_1, w_2, x; \fwd{z}{z'}$}.

We implement a surveyor by two processes {\small$\m{S}$} and {\small$\m{S}'$} to take advantage of $\lang$'s support for regrading,
as we will detail in \Cref{sec:motivating-examples-down}.
The surveyor starts out as process {\small$\m{S}$} and calls process {\small$\m{S}'$} right after having received the request from its parent, the analyzer.

Suppose that the tactic is a secret that a participant shall not deduce. 
The implementations in \Cref{fig:survey-impl} respect this security condition:
the analyzer interacts with the participants via the surveyors the same
regardless of the tactic it received.
Existing IFC session type systems \cite{DerakhshanLICS2021,BalzerARXIV2023}, however,
reject these implementations,
because they view the analyzer as tainted as soon as it learns the secret tactic,
and disallow further communication with the participants via the surveyor.
This paper relaxes this restriction---while preserving PSNI---and allows
the tainted surveyor to interact with the participants while putting safeguards in place (synchronization patterns, \Cref{sec:downgradingpolicy}--\Cref{sec:reclassification-nutshell} and \Cref{sec:sync-patterns}) that prevent the surveyor from leaking the tactic.

\section{Key ideas}
\label{sec:key-ideas}
This section develops the main ideas underlying our flexible IFC session type system;
the type system and dynamics is given in \Cref{sec:type-system}.
The latter is asynchronous, \ie non-blocking sends and blocking receives (see \Cref{sec:vanilla-dynamics} and \Cref{sec:def-dynamics}). 
An asynchronous semantics allows for a more permissive noninterference statement since message receipt is not observable.

It may be helpful to foreshadow our attacker model (detailed in \Cref{subsubsec:attacker-model}).
We assume that an attacker knows the implementation of all processes and can observe messages sent over channels with lower or equal confidentiality level than the attacker. 
The attacker cannot measure time but can observe the relative order in which messages are sent along different observable channels.
As we aim for PSNI, we need to ensure that an attacker is unable to deduce any information from non-reactiveness either.

\subsection{Regrading confidentiality}\label{sec:motivating-examples-down}

It is now time to consider the red annotations {\small$\maxsec{\dlabel{\mb{c}}{\mb{e}}}$} on channels
and the green annotations {\small$\runsec{@\dlabel{\mb{c_0}}{\mb{e_0}}}$} on process terms in \Cref{fig:survey-impl},
where {\small$\mb{c}$}, {\small$\mb{d}$}, {\small$\mb{c_0}$}, and {\small$\mb{e_0}$} range over levels in the security lattice
{\small$\mb{guest} \sqsubseteq \mb{alice} \sqsubseteq \mb{bank}$} and {\small${\mb{guest} \sqsubseteq \mb{bob} \sqsubseteq \mb{bank}}$}.
We focus on the first components {\small$\maxsec{\mb{c}}$} and {\small$\runsec{\mb{c_0}}$} for now,
which denote confidentiality labels.
They are 
{ adopted from}
existing IFC session type systems \cite{DerakhshanLICS2021,BalzerARXIV2023},
which are based solely on confidentiality.

The first component {\small$\maxsec{\mb{c}}$}  of the pair {\small$\maxsec{\dlabel{\mb{c}}{\mb{e}}}$}
indicates the \emph{maximal confidentiality} of a process, \ie the maximal level of secret information the process may ever obtain.  
As to be expected, the analyzer ({\small$\m{A}$}), the tactician ({\small$\m{T}$}), and both surveyors ({\small$\m{S}_1$} and {\small$\m{S}_2$}) have maximal confidentiality {\small$\maxsec{\mb{bank}}$},
as they are affiliated with the bank and have the clearance of knowing the secret tactic.
The processes associated with the participants have the lowest maximal confidentiality {\small$\maxsec{\mb{guest}}$},
as they must not gain any information about the bank's secrets. 

The first component {\small$\runsec{\mb{c_0}}$} of the pair {\small$\runsec{@\dlabel{\mb{c_0}}{\mb{e_0}}}$}
denotes a process' \emph{running confidentiality}.
It denotes the highest level of secret information a process has obtained so far
and thus is analogous to the \textsf{pc} label in imperative languages,
making the type system flow-sensitive.
The running confidentiality is capped by the maximal confidentiality, \ie {\small$\runsec{\mb{c_0}} \sqsubseteq \maxsec{\mb{c}}$}.
When defining a process, a programmer must indicate the process' maximal confidentiality as well as the \emph{initial} running confidentiality at which the process starts out when spawned.

An IFC type system increases
the running confidentiality accordingly, whenever information of higher confidentiality is received, and
disallow sends from senders with a higher or incomparable running confidentiality than the recipient.
For example, the analyzer starts with the running confidentiality {\small$\runsec{\mb{guest}}$}.
When it receives the secret from the tactician, its running confidentiality increases to {\small$\runsec{\mb{bank}}$}. 
After the receive, the analyzer can still send the message {\small$\mi{ask}$} to a surveyor as the maximal confidentiality of the surveyor is {\small$\maxsec{\mb{bank}}$}. 
However, as soon as the surveyor receives this message from the analyzer, its running confidentiality increases to {\small$\runsec{\mb{bank}}$},
which prevents it from sending messages to participants, whose maximal confidentiality is {\small$\maxsec{\mb{guest}}$},
because {\small$\runsec{\mb{bank}} \not\sqsubseteq \maxsec{\mb{guest}}$}.

To address this limitation of existing IFC session type systems,
we develop \emph{regrading policies}.
A regrading policy is polymorphic in a level {\small$f$} of the security lattice and certifies that, when regrading the running confidentiality to {\small$f$},
any secrets of confidentiality {\small$d_s \not\sqsubseteq f$} learned so far
will not affect future communications of confidentiality at most {\small$f$} after regrading.

To convey the regrading policy that a process must obey,
we 
{ introduce}
\emph{integrity} annotations,
amounting to the second components in the pairs {\small$\runsec{@\dlabel{\mb{c_0}}{\mb{e_0}}}$} and {\small$\maxsec{\dlabel{\mb{c}}{\mb{e}}}$}.
We refer to {\small$\runsec{\mb{e_0}}$} as the \emph{running integrity} of the process
and to {\small$\maxsec{\mb{e}}$} as the \emph{minimal integrity} of the process.
The running integrity specifies what level a process is allowed to regrade to
and is capped by the minimal integrity, \ie {\small$\runsec{\mb{e_0}} \sqsubseteq \maxsec{\mb{e}}$}.
For example, the surveyor process {\small$\m{S}$} runs at {\small$\runsec{@\dlabel{\mb{bank}}{\mb{guest}}}$} after having received the request from the analyzer,
where the running integrity {\small$\runsec{\mb{guest}}$} licenses it to drop its running confidentiality as low as {\small$\runsec{\mb{guest}}$} upon tail-calling,
but forces it to obey that policy too.
The minimal integrity {\small$\maxsec{\mb{e}}$} of a process is naturally capped by its maximal confidentiality {\small$\maxsec{\mb{c}}$},
\ie {\small$\maxsec{\mb{e}} \sqsubseteq \maxsec{\mb{c}}$}, because a process cannot learn (and thus drop) more secrets than it is licensed to.
As a result, a process with maximal confidentiality and minimal integrity {\small$\maxsec{\dlabel{\mb{c}}{\mb{c}}}$} effectively amounts to a non-regrading process.

{We draw both integrity and confidentiality levels from the same security lattice,
but interpret integrity levels \emph{dually}, as usual: the lower a level in the lattice, the higher its integrity\footnote{{
We adopt the following convention to avoid any confusion: we use ``running integrity'', ``minimal integrity'', and ``integrity level'' for elements in the security lattice, and otherwise just ``integrity''.  Thus, when the integrity level in the lattice increases, the integrity decreases.}}.
For regrading this means that the lower the level a process regrades to, the stricter the process' policy becomes.
The $\lang$ type system thus increases the running integrity of a process upon receiving from a process with a higher minimal integrity and
disallows sends from a process of a higher or incomparable running integrity than the minimal integrity of the recipient (see \Cref{sec:type-system}).}
 
The process definitions in \Cref{fig:survey-impl} only use concrete levels from the security lattice for confidentiality and integrity annotations.
To increase code reusability, $\lang$ supports \emph{security-polymorphic} process definitions.
Such definitions range over security variables for confidentiality and integrity {levels}
and may state constraints on these variables.
The constraints must be satisfied upon spawning,
which is checked by the $\lang$ type checker using a security theory.
\Cref{sec:type-system} expands on security-polymorphic process definitions.

\subsection{The need for regrading policies}\label{sec:downgradingpolicy}

While a regrading policy licenses regrading, it also imposes restrictions on a process' communication patterns to guarantee noninterference.
To distill these restrictions,
we next explore insecure implementations of the analyzer-surveyor example from \Cref{sec:motivating-examples} that do not satisfy PSNI.

\begin{figure*}
\centering
\begin{small}
\input{figs/insecure-impl.tex}
\end{small}
 \vspace{-5pt}
\caption{Insecure hasty analyzer $\m{A_H}$ and reckless analyzer $\m{A_R}$, rejected by $\lang$.}
 \vspace{-20pt}
\label{fig:insecure-impl}
\end{figure*}

\subsubsection{Hasty analyzer - optimization may introduce a timing attack}

In the red tactic, the decision of the analyzer does not depend on the {result} provided by the second surveyor.
Hence, one may be tempted to optimize the analyzer implementation
by refraining from asking the opinion of the second surveyor in the branch corresponding to the red tactic
(see {\small$\m{A_H}$} in \Cref{fig:insecure-impl}).  
As appealing as this optimization seems, it leads to a leak.
An attacker of confidentiality level {\small$\maxsec{\mb{guest}}$} can simultaneously observe the sequence of messages transmitted along channels {\small$u_1$} and {\small$u_2$} of confidentiality {\small$\maxsec{\mb{guest}}$},
which connect the participants to the surveyors, 
and thus, can deduce which secret tactic was chosen:
 in case of the green tactic, the sequence of messages along {\small$u_1$} and {\small$u_2$} has the recurrence
 {\small$u_1.\mathit{ask}; u_1. (\mathit{yes/no}); u_2.\mathit{ask}; u_2. (\mathit{yes/no})$},
whereas it has the recurrence
{\small$u_1.\mathit{ask}; u_1. (\mathit{yes/no})$} for the red tactic.
Observing, for example, the sequence
{\small$u_1.\mathit{ask}; u_1.(\mathit{yes/no}); u_2.\mathit{ask}; u_2.(\mathit{yes/no}); u_1.\mathit{ask}; u_1.(\mathit{yes/no}),$}
the attacker can deduce that the first tactic used was green and the second one was red.
These leaks constitute \emph{timing attacks} because the attacker cannot deduce the secret by only looking at a single channel, but needs to observe the relative timing of messages passed along two or more channels.

\subsubsection{Reckless analyzer - be careful with synchronization}

The previous example shows that a send along a channel, present in one branch, but omitted from another, may lead to a leak.
One may naively suspect that these leaks only involve sends.
The analyzer version {\small$\m{A_R}$} in \Cref{fig:insecure-impl} showcases the opposite:  
mismatched receives are at least as dangerous as mismatched sends.
In the original implementation (\Cref{fig:survey-impl}), the analyzer synchronizes the communications of surveyors and participants across branches, ensuring, in particular, that the second participant always casts their vote after the first.
The reckless analyzer {\small$\m{A_R}$} breaks this synchronization in the red branch by swapping the order of {\small$\mathbf{case}\, w_1$ and $w_2.\mathit{ask}$}.
This minimal change allows the two surveyors to run concurrently when the tactic is red and produce the sequence of messages
{\small $u_2.\mathit{ask}; u_2. (\mathit{yes/no}); u_1.\mathit{ask}; u_1. (\mathit{yes/no})$}
along channels $u_1$ and $u_2$,
a sequence that is impossible to produce in the green tactic {(recall that receives are blocking, but sends are not)}.
Both {\small$\m{A_H}$} and {\small$\m{A_R}$} leak the secret with a timing attack,
\ie the simultaneous observation of the relative order of sends along several channels.

There is a subtle connection between timing attacks and leaks due to the non-reactivity of a process.
For instance, let us assume that the second participant loops internally and never casts its vote.
The attacker can then deduce the secret tactic in the hasty implementation of the analyzer by only observing the communications of the first participant along $u_1$: 
the sequence {\small$u_1.\mathit{ask}; u_1. (\mathit{yes/no}); u_1.\mathit{ask}; u_1. (\mathit{yes/no})$} indicates that the prior tactic was red.  
A similar scenario holds for the reckless analyzer when the first participant is non-reactive.

\subsection{Regrading policies in a nutshell}\label{sec:reclassification-nutshell}

Our model allows the running confidentiality of a process to be dropped as low as its running integrity. 
Performing such a venturous act, needs a corresponding safety net in place: a regrading policy that is polymorphic in the running integrity to preserve noninterference.
The examples in \Cref{sec:downgradingpolicy} suggest that a regrading policy must enforce the following properties:
\begin{enumerate}
\item The continuation of a process after regrading must not depend on any secret higher than or incomparable to its running integrity.
That is, when branching on a secret {\small$\runsec{\mb{d}_s}$}, the same process must be spawned for the recursive call in every branch,
if that process regrades to a level $\runsec{\mb{e}_0}$ such that {\small$ \runsec{\mb{d}_s} \not \sqsubset \runsec{\mb{e}_0}$}.

\item Whether a process reaches its regrading point or not must not depend on any secret higher than or incomparable to its running integrity.
\end{enumerate}
The latter property
 is violated in both analyzer implementations of \Cref{fig:insecure-impl}, amounting to a leak.
In the hasty implementation {\small$\m{A_H}$}, the second surveyor only gets to the regrading point if the secret tactic is green.
In the reckless implementation {\small$\m{A_R}$},
if the secret tactic is green, the second surveyor gets to the regrading point only if the first participant casts their vote,
whereas if the secret is red, there is no such chaining. 

The above properties
capture semantically what conditions secure processes that employ regrading must meet to observe PSNI.
In \Cref{sec:sync-patterns} we develop static checks that, when satisfied by a process,
ensure that the process also meets these semantic conditions.
We refer to those checks as \emph{synchronization pattern} checks,
and they are enforced by the $\lang$ type checker.
The pattern checks are of the form {\small$\barbeq{P}{d}{f}{Q}$} and
synchronize {\small$P$} and {\small$Q$} in terms of their communication actions:
if {\small$P$} outputs along channel {\small$x$}, so must {\small$Q$}, and if {\small$P$} inputs along channel {\small$x$}, so must {\small$Q$}, and vice versa.
The pattern checks are invoked pairwise for every two branches, {\small$P_i$} and {\small$P_j$}, in a {\small$\mb{case}$} statement,
requiring that {\small$\barbeq{P_i}{d}{f}{P_j}$}.
The check is conditioned on the running confidentiality {\small$d$} and running integrity {\small$f$} at the branching point.

An important feature of our regrading policies is that they are \emph{compositional}.
We take advantage of the fact that intuitionism imposes a rooted tree structure on process configurations
and require that a configuration aligns with the security lattice:
for every child process and parent process with maximal confidentiality and minimal integrity
$\maxsec{\dlabel{\mb{c}}{\mb{e}}}$ and $\maxsec{\dlabel{\mb{c}'}{\mb{e}'}}$, \resp
it must hold that $\maxsec{\dlabel{\mb{c}}{\mb{e}}} \sqsubseteq \maxsec{\dlabel{\mb{c}'}{\mb{e}'}}$,
ensuring that a child process can learn at most as much as its parent and has at least an as stringent regrading policy as its parent.
We design our type system to preserve this property as an invariant.

\section{Blueprint for Formal Development}
\label{sec:roadmap}
Before delving into the formal development, we review the statics and dynamics of a vanilla intuitionistic session type system and give a roadmap for the upcoming technical sections.
We use the intuitionistic session type system introduced by Toninho et al.~\cite{ToninhoESOP2013,CairesCONCUR2010} as our vanilla intuitionistic session type system.
$\lang$ enhances such a vanilla session type system with confidentiality and integrity annotations to establish PSNI.
{$\lang$ adopts the former from existing intuitionistic IFC session type systems~\cite{BalzerARXIV2023,DerakhshanLICS2021}.
The integrity annotations as well as the synchronization patterns are contributions unique to $\lang$.
The addition requires us to define the relationships between all these levels, expressed as invariants, and the development of synchronization patterns. 
Similar to the system in~\cite{BalzerARXIV2023} our language supports general recursion and allows processes to be polymorphic in confidentiality levels. 
$\lang$ extends label polymorphism to also accommodate integrity levels.
}

\subsection{Vanilla intuitionistic session types - statics}\label{sec:vanilla-statics}
Process terms and session types are built by the grammar in~\Cref{sec:motivating-examples} and~\Cref{tab:session_types}.
The process typing judgment is of the form
{\small\(\Delta \vdash_{\Sigma} P :: x{:}A\),}
\noindent to be read as: \textit{``Process {\small$P$}
provides a session of type {\small$A$} along channel {\small$x$},
given the typing of sessions offered along channels in {\small$\Delta$}}.
{\small$\Delta$} is a \emph{linear} typing context consisting of the channels connecting {\small$P$} to its children, and {\small$x$} connects {\small$P$} to its parent. The global signature {\small$\Sigma$} includes recursive type definitions and process definitions.

\subsubsection{Process term typing}\label{sec:term-typing-v}
\Cref{fig:typing-rules} lists the process term typing rules.
The parts in red are specific to $\lang$ and can be ignored for now; we discuss them in~\Cref{sec:type-system}.
As is usual in intuitionistic linear session type languages, the rules are given in a sequent calculus. When read from bottom to top, the rules closely follow the behavior described in Table~\ref{tab:session_types}: right rules describe a type from the point of view of a provider, and left rules from the point of view of a client.
For example, rule {\small$\intchoicesymb_{R_1}$} describes the behavior of the process that provides a channel with the protocol {\small$\oplus\{\ell{:}A_\ell\}_{\ell \in L}$}: it chooses a label {\small $k{\in} L$} and sends it to the client along channel {\small$x$}, and then continues by checking process {\small$P$} providing {\small$A_k$} in the premise. Note that the typing rules {\small$\intchoicesymb_{R_1}$} and {\small$\intchoicesymb_{R_2}$} are identical, ignoring the security annotations.
Rule {\small$\msc{Fwd}$} ensures that the type of the two channels involved in forwarding is the same.
Rule {\small$\msc{Spawn}$} spawns a new child process {\small$X$} along the fresh channel {\small$x$}; it first checks that {\small$X$} is defined in the signature (first premise) and thus is well-typed and then continues with type-checking the continuation {\small$Q$} (last premise).

\subsubsection{Signature checking}

To support general recursive types, we employ a global signature $\Sigma$
comprised of all process definitions.
Each process definition is typed individually,
assuming that the other processes in the signature are well-typed.
The signature also comprises recursive type definitions.
When typing a process with a recursive protocol, the signature is consulted to unfold the definition.

For example, the signature for the bank survey example in~\Cref{sec:motivating-examples} consists of the definitions for processes $\m{A}$, $\m{S}$, and $\m{S'}$ as shown in~\Cref{fig:analyzer} and the definition of recursive types as shown in~\Cref{fig:analyzer}(d). 
In our formal development, we use a more concise syntax for process definitions than what is shown in~\Cref{fig:analyzer}. In particular, we write them in the form of {\small$\Delta \vdash X=P{::} (z{:}A)$}. For instance, the concise version of the process definition for process $\m{S}$ in~\Cref{fig:analyzer}, ignoring its security annotations, is {\small $u{:}\m{vote} \vdash \m{S} = {case} \, w \, \,(\mathit{ask} \Rightarrow (w\leftarrow \m{S}'\leftarrow u)){::} w{:}\m{vote}
$}.

Type checking starts with typing the signature by the rules listed in~\Cref{fig:signature}; again, ignore the parts in red for now, as they will be discussed later in~\Cref{sec:type-system}. 
The rules are in a sequent calculus and should be read from bottom to top.
Rule {$\Sig_3$} ensures that each process definition in the signature is well-typed. It invokes the process term typing judgment for a process definition relative to the entire global signature {\small$\Sig$} (fifth premise) and continues with checking the rest of the signature (sixth premise).
Rule { $\Sig_2$} ensures that all recursive types in the signature are well-formed via its first premise, the judgment {\small$\typwfmd{A}$}. This judgment denotes a well-formed session type definition, which, if recursive,
must be \emph{equi-recursive}~\cite{CraryPLDI1999} and \emph{contractive}.
Equi-recursiveness ensures that types are related up to their unfolding without requiring explicit (un)fold messages (see rules $\msc{TVar}_R$ and $\msc{TVar}_L$).
Contractiveness demands an exchange before recurring.

\subsection{Vanilla intuitionistic session types - dynamics}\label{sec:vanilla-dynamics} 
At runtime, process definitions result in a configuration of processes structured as a forest of rooted trees. The nodes in the forest represent runtime processes and messages, denoted as {\small$\mathbf{proc}(y_\alpha; P)$} and {\small$\mathbf{msg}(M)$}, \respb. 
We use metavariables {\small$\mathcal{C}$} and {\small$\mathcal{D}$} to refer to a configuration and formally define it as a set of runtime processes and messages (the nodes in the tree). The connection between the nodes will be inferred through configuration typing.
In {\small$\mathbf{proc}(y_\alpha, P)$}, the metavariable {\small$y_\alpha$} represents the process' offering channel, and $P$ represents the process' source code (where free variables have been substituted by channels). Runtime messages {\small$\mathbf{msg}(M)$} are a special form of processes created to model asynchronous communication: we implement asynchronous sends by spawning off the message {\small$\mathbf{msg}(M)$} that carries the sent message {\small$M$}.
A sent message {\small$M$} can be of the form 
{\small $x.k$}, {\small $\mb{send}\, y\,x$}, or {\small$\cls{x}$}, corresponding to label output, channel output, and a termination message, \respb.

Runtime channels {\small$y_\alpha$} are annotated with a generation subscript {\small$\alpha$}, which distinguishes them from channel variables {\small$y$} used in the statics. 
Using generation subscripts, we can ensure that both the sender and receiver agree on a new name for the continuation channel without explicitly passing the name in a message. 
We will see an example of using generation subscripts in the next paragraph.

\subsubsection{Asynchronous dynamics}\label{sec:dynamicsv}
{We chose an asynchronous semantics for $\lang$ because it weakens the attacker model, allowing a more permissive IFC enforcement, and is also a sensible model for practical purposes.}
The dynamics is given in \Cref{fig:dynamics} in terms of multiset rewriting rules \cite{CervesatoARTICLE2009} (again, for now the parts in red can be ignored).
Multiset rewriting rules express the dynamics as state transitions between configurations and are \emph{local} in that they only mention the parts of a configuration they rewrite.

For example, in case of {\small$\otimes_{\m{snd}}$},  the provider {\small$\mathbf{proc}(y_\alpha, \mathbf{send}\,x_\beta\,y_\alpha;P)$} spawns off the message process {\small$\mathbf{msg}(\mathbf{send}\,x_\beta\,y_\alpha)$}, indicating that the channel {\small$x_\beta$} is sent over channel {\small$y_\alpha$}.
Since sends are non-blocking, the provider steps to its continuation {\small$\mathbf{proc}(y_{\alpha+1}, ( [y_{\alpha+1}/y_\alpha]P))$}, allocating a new generation {\small$\alpha{+}1$} of the carrier channel {\small$y_\alpha$}. 
In {\small$\otimes_{\m{recv}}$}, upon receipt of the message, the receiving client process {\small $\mathbf{proc}(y_\alpha, w \leftarrow \mathbf{recv}\, y_\alpha; P)$} will increment the generation of the carrier channel in its continuation.
The scenario is similar for {\small$\oplus_{\m{snd}}$} and {\small$\oplus_{\m{rcv}}$}, but the sent message is a label in this case, and 
similar for {\small$\multimap_{\m{snd}}$},\,{\small$\multimap_{\m{rcv}}$} and {\small$\&_{\m{snd}}$},\,{\small$\&_{\m{rcv}}$}, except that in these cases the sender is the client and the receiver the provider. 
In the rules for the termination protocol, i.e., {\small$1_{\m{snd}}$} and {\small$1_{\m{rcv}}$}, there is no continuation channel.
Rule $\msc{Spawn}$ creates a process offering along a fresh runtime channel $x_0$ by looking up the definition of the spawnee in the signature.

The dynamics for the forwarding process $\mathbf{proc}(y_\alpha, y_\alpha \leftarrow x_\beta)$ is often described as fusing the two channels, $y_\alpha$ and $x_\beta$. 
We, however, represent forward as syntactic sugar by including forwarder processes defined by structural induction on the type of the channels involved in the forward, amounting to an identity expansion.
The reader may see the appendix for the details.

\subsubsection{Configuration typing}
The configuration typing judgment is of the form {\small$\Delta\Vdash_\Sigma \mc{C}:: \Delta'$} indicating that the configuration {\small$\mc{C}$} provides sessions along the channels in {\small$\Delta'$}, using sessions provided along channels in {\small$\Delta$}.
{\small$\Delta$} and {\small$\Delta'$} are both linear contexts, consisting of actual runtime channels of the form {\small$y_{\alpha}{:}B$}. 
We often use the term \emph{open configurations} to emphasize that our configurations may have external free channels in both $\Delta$ and $\Delta'$ to communicate with the environment. This is in contrast to restricting $\Delta$ to be an empty context, which means the configuration only has external free channels to communicate with a client.

\Cref{fig:config-typing} shows the typing rules, enforcing that the configuration is structured as a forest and the source code of each node is well-typed.
For brevity, \Cref{fig:config-typing} omits a channel's generation as well as $\Sigma$, which is fixed.
The $\mathbf{emp}$ rule types an empty forest.
The $\mathbf{comp}$ rule types each tree in the forest. The $\mathbf{proc}$ rule and the $\mathbf{msg}$ rule check the well-typedness of the root node of a tree when it is a process or message, resp., using the last premises. Well-typedness of the remaining forest is checked by the eighth and fourth premise of the latter two rules, resp.
The last premise of the $\mathbf{msg}$ rule calls message typing rules, which we provide in the appendix~\Cref{apx:sec:async_dynamics}.

The typing rules ensure progress and preservation, i.e.,
the dynamics can always step an open configuration {\small$\Delta\Vdash \mc{C}:: \Delta'$} to {\small$\Delta\Vdash \mc{C}':: \Delta'$}.

\subsection{Roadmap for $\lang$}\label{sec:roadmap-2}
To develop the ideas discussed in~\Cref{sec:key-ideas} and establish PSNI, we supplement the vanilla type system with a security layer. Here, we provide a roadmap to the key parts of our development.
\subsubsection{Regrading policy type system}
The first step in our formal development is to enrich the process term typing judgment with security levels as
{\small\(\ptypb{\D}{P}{c_0}{e_0}{x}{A}{c}{e}.\)}
\noindent Here, $\Psi$ denotes a security theory which includes the security lattice and polymorphic confidentiality and integrity variables. The pair $\langle c_0, e_0 \rangle$ denotes the running confidentiality and integrity of the process, aka its taint level. The pair $\langle c,e \rangle$ denotes the max confidentiality and min integrity of the process. Similarly, each channel in $\Delta$ is annotated with a pair of confidentiality and integrity levels denoting its provider's max confidentiality and min integrity.
%

%
Similarly, we use security labels to annotate configurations and configuration typing judgments.
In particular, runtime processes in  configuration {\small $\mathcal{C}$} now have the form {\small $\mathbf{proc}(y_\alpha\langle c,e \rangle, P@ \langle c',e' \rangle)$}, where {\small$\langle c,e \rangle$} is the pair of max confidentiality and min integrity of the process, and {\small$\langle c',e' \rangle$} is the pair of its running confidentiality and integrity.

The typing rules include security constraints highlighted in red---the ones we have been ignoring in~\Cref{sec:vanilla-statics}.  The purpose of these security annotations is to (i) ensure that the taint levels are propagated correctly, (ii) prevent a tainted process from sending information to a process with a lower max confidentiality/higher min integrity, (iii) ensure that a process regrades its running confidentiality only as low as its running integrity, and (iv) verify that the process indeed adheres to the policy enforced by its running integrity. The first three conditions are enforced by imposing the security constraints on the process term typing rules in~\Cref{fig:typing-rules}.
 The last check is enforced by the synchronization pattern checks in~\Cref{fig:sync-patterns}.

\subsubsection{PSNI via a logical relation}
Our ultimate goal is to prove that well-typed $\lang$ processes enjoy PSNI.
We prove PSNI as an equivalence up to an attacker's confidentiality level $\xi$ using a logical relation, which then delivers a process bisimulation.

To define PSNI for an open configuration in the shape of a tree {\small$\Psi_0; \Delta \Vdash \mathcal{D}:: u_\alpha {:}T\langle c,e \rangle$},
given a global security lattice {\small $\Psi_0$} fixed for an application,
we consider the external free channels {\small $y_\beta{:}B\langle c',e' \rangle \in \Delta, u_\alpha {:}T\langle c,e \rangle$} with max confidentiality $c' \sqsubseteq \xi$.
We call the set of these channels that connect a configuration to its environment and that are observable to an attacker, the \emph{confidentiality interface}.

Such an open configuration satisfies noninterference if, when composed with different high-confidentiality processes, behaves the same along the confidentiality interface. 
We prove that all well-typed open configurations enjoy PSNI by designing a logical relation and showing that (i) all well-typed configurations are self-related (fundamental theorem, \Cref{lem:reflexivity}) and (ii) any two related configurations are bisimilar (adequacy theorem, \Cref{thm:adeq}).

To prove these results, our logical relation needs to consider some free channels in {\small $\Delta, u_\alpha {:}T\langle c,e \rangle$} that are not directly observable in terms of their confidentiality but can have an observable effect due to their integrity.  We thus define a superset of the confidentiality interface that additionally contains channels {\small $y_\beta{:}B\langle c',e' \rangle \in \Delta, u_\alpha {:}T\langle c,e \rangle$} with min integrity $e' \sqsubseteq \xi$.
We call this interface the \emph{integrity interface}.

\section{Regrading policy type system}
\label{sec:type-system}
This section formalizes $\lang$'s type system with synchronization patterns and asynchronous dynamics.
$\lang$ supports security-polymorphic process definitions,
an example of which is discussed in \Cref{sec:bank-examples}.

\subsection{Process term typing}\label{sec:term-typing}

Let us recall the process term typing judgment from~\Cref{sec:roadmap-2}:
$$\ptypb{\D}{P}{c_0}{e_0}{x}{A}{c}{e}.$$

\noindent We read it as: \emph{``Process {\small$P$},
with maximal confidentiality and minimal integrity {\small$\dlabel{c}{e}$}
and running confidentiality and integrity {\small$\dlabel{c_0}{e_0}$},
provides a session of type {\small$A$} along channel {\small$x$},
given the typing of sessions offered along channels in {\small$\Delta$} and
given a security theory {\small$\Psi$}''}.
{\small $\Delta$} is a \emph{linear} typing context with the grammar {\small$\Delta ::= \cdot \mid {x{:}A}\langle{c},{e}\rangle, \Delta$}.
A security theory {\small$\Psi$} is used for type checking security-polymorphic process definitions.
It consists of the global security lattice {\small$\Psi_0$} which is fixed for an application, security variables {\small$\psi$}, and constraints on the variables (see \Cref{sec:bank-examples} and~\Cref{apx:sec:lattice} in the appendix).

We impose the following properties on the typing judgment, as discussed in detail in~\Cref{sec:key-ideas}. These properties are maintained by typing as invariants. When reading them, note that ``high integrity'' and ``low confidentiality'' both mean a ``lower level'' in the security lattice.
\begin{enumerate}
\item[(a)] {\small$\forall y{:}B\dlabel{d}{f} \in \D . \latcstr{d \sqsubseteq c}, \latcstr{f \sqsubseteq e}$}: ensuring that a child process can learn at most as much as its parent and has at least an as stringent regrading policy as its parent.
\item[(b)] {\small$\latcstr{c_0 \sqsubseteq c}$ and $\latcstr{e_0 \sqsubseteq e}$}: ensuring that a process knows at most as much as it is licensed to and adheres to at least an as stringent regrading policy as it promises.
\item[(c)] {\small$\latcstr{e_0 \sqsubseteq c_0}$ and $\latcstr{e \sqsubseteq c}$}: ensuring that a process cannot drop more secrets than it knows and is licensed to learn, \respb.
\end{enumerate}
Moreover, the typing rules for input and output have to conform to the following schema to make sure that the running confidentiality and running integrity correctly reflect the taint level and that a tainted process does not leak information via a send:
\begin{enumerate}
  \item[(1)] \textbf{after} receiving a message, the running confidentiality and running integrity of the receiving process must be
  increased to \textbf{at least} the maximal confidentiality and minimal integrity of the sending process, and
\item[(2)]\label{lab:before-snd} \textbf{Before} sending a message, the running confidentiality and running integrity of the sending process must be
  \textbf{at most} the maximal confidentiality and minimal integrity of the receiving process.
\end{enumerate}
Conforming to this schema leads to the premises of the form
{\small $\latcstr{\dlabel{d_1}{f_1} = \dlabel{c}{e} \sqcup \dlabel{d_0}{f_0}}$} and {\small $\latcstr{\dlabel{d_0}{f_0} \sqsubseteq \dlabel{c}{e}}$}
to meet condition (1) and (2), \resp above.
The judgments are defined formally in~\Cref{apx:sec:lattice} in the appendix.

It is time to consider the red security annotations of the typing rules in~\Cref{fig:typing-rules}. We explain how the rules satisfy conditions (1) and (2) above:

\begin{figure*}
\begin{center}
\begin{small}
    \def \MathparLineskip {\lineskip=0.2cm}
\begin{mathpar}
\inferrule*[right=$\intchoicesymb_{R_1}$]
{{\it \color{red}\latcstr{c=e}} \\
k \in L \\
\ptypbr{\D}{P}{c_0}{e_0}{x}{A_k}{c}{e}}
{\ptypr{\D}{\labsnd{x^{\dlabelr{c}{e}}}{k}{P}}{c_0}{e_0}{x}{\intchoice{A}}{c}{e}}

\inferrule*[right=$\intchoicesymb_{R_2}$]
{{\it\latcstrnot{c=e}} \\
{\it \color{red}\forall i,j \in L. \,A_i=A_j} \\
k \in L \\
\ptypbr{\D}{P}{c_0}{e_0}{x}{A_k}{c}{e}}
{\ptypr{\D}{\labsnd{x^{\dlabelr{c}{e}}}{k}{P}}{c_0}{e_0}{x}{\intchoice{A}}{c}{e}}

\inferrule*[right=$\intchoicesymb_L$]
{{\it \color{red}\latcstr{\dlabelr{d_1}{f_1} = \dlabelr{c}{e} \sqcup \dlabelr{d_0}{f_0}}} \\
\forall k \in L \\
\ptypbr{\D, x{:}A_k\dlabelr{c}{e}}{Q_k}{d_1}{f_1}{z}{C}{d}{f} \\
{\it \color{red}\forall i,j \in L. \barbeq{Q_i}{d_1}{f_1}{Q_j}}}
{\ptypr{\D, x{:}\intchoice{A}\dlabelr{c}{e}}{\labrcv{x^{\dlabelr{c}{e}}}{Q}}{d_0}{f_0}{z}{C}{d}{f}}

\inferrule*[right=$\extchoicesymb_R$]
{\forall k \in L \\
\ptypbr{\D}{P_k}{c}{e}{x}{A_k}{c}{e} \\
{\it \color{red}\forall i,j \in L. \barbeq{P_i}{c}{e}{P_j}}}
{\ptypr{\D}{\labrcv{x^{\dlabelr{c}{e}}}{P}}{c_0}{e_0}{x}{\extchoice{A}}{c}{e}}

\inferrule*[right=$\extchoicesymb_{L_1}$]
{{\it \color{red}\latcstr{c=e}} \\
{\it \color{red}\latcstr{\dlabelr{d_0}{f_0} \sqsubseteq \dlabelr{c}{e}}} \\
k \in L \\
\ptypbr{\D, x {:} A_k \dlabelr{c}{e}}{Q}{d_0}{f_0}{z}{C}{d}{f}}
{\ptypr{\D, x {:} \extchoice{A} \dlabelr{c}{e}}{\labsnd{x^{\dlabelr{c}{e}}}{k}{Q}}{d_0}{f_0}{z}{C}{d}{f}}

\inferrule*[right=$\extchoicesymb_{L_2}$]
{\latcstrnot{c=e} \\
{\it \color{red}\forall i,j \in L. \,A_i=A_j}\\
{\it \color{red}\latcstr{\dlabelr{d_0}{f_0} \sqsubseteq \dlabelr{c}{e}}} \\
k \in L \\
\ptypbr{\D, x {:} A_k \dlabelr{c}{e}}{Q}{d_0}{f_0}{z}{C}{d}{f}}
{\ptypr{\D, x {:} \extchoice{A} \dlabelr{c}{e}}{\labsnd{x^{\dlabelr{c}{e}}}{k}{Q}}{d_0}{f_0}{z}{C}{d}{f}}

\inferrule*[right=$\chanoutsymb_R$]
{\ptypbr{\D}{P}{c_0}{e_0}{x}{B}{c}{e}}
{\ptypr{\D, y{:}A\dlabelr{d}{f}}{\chnsnd{y}{x^{\dlabelr{c}{e}}}{P}}{c_0}{e_0}{x}{\chanout{A}{B}}{c}{e}}

\inferrule*[right=$\chanoutsymb_L$]
{{\color{red}\latcstr{\dlabelr{d_1}{f_1} = \dlabelr{c}{e} \sqcup \dlabelr{d_0}{f_0}}} \\
\ptypbr{\D, x{:}B\dlabelr{c}{e}, y{:}A\dlabelr{c}{e}}{Q}{d_1}{f_1}{z}{C}{d}{f}}
{\ptypr{\D, x{:}\chanout{A}{B}\dlabelr{c}{e}}{\chnrcv{y^{\dlabelr{c}{e}}}{x^{\dlabelr{c}{e}}}{Q}}{d_0}{f_0}{z}{C}{d}{f}}

\inferrule*[right=$\chaninsymb_R$]
{\ptypbr{\D, y{:}A\dlabelr{c}{e}}{P}{c}{e}{x}{B}{c}{e}}
{\ptypr{\D}{\chnrcv{y^{\dlabelr{c}{e}}}{x^{\dlabelr{c}{e}}}{P}}{c_0}{e_0}{x}{\chanin{A}{B}}{c}{e}}

\inferrule*[right=$\chaninsymb_L$]
{{\it \color{red}\latcstr{\dlabelr{d_0}{f_0} \sqsubseteq \dlabelr{c}{e}}} \\
\ptypbr{\D, x{:} B\dlabelr{c}{e}}{Q}{d_0}{f_0}{z}{C}{d}{f}}
{\ptypr{\D, x{:} \chanin{A}{B}\dlabelr{c}{e}, y{:}A\dlabelr{c}{e}}{\chnsnd{y}{x^{\dlabelr{c}{e}}}{Q}}{d_0}{f_0}{z}{C}{d}{f}}

\inferrule*[right=$\msc{Fwd}$]
{{\color{red}\Psi \Vdash \langle c_1, e_1 \rangle = \langle c_2,e_2 \rangle}}
{\ptypr{y{:}A\dlabelr{c_1}{e_1}}{\fwd{x^{\dlabelr{c_2}{e_2}}}{y^{\dlabelr{c_1}{e_1}}}}{c_0}{e_0}{x}{A}{c_2}{e_2}}

\inferrule*[right=$\msc{Spawn}$]
{\procdefr{\D_1'}{X}{P}{\psi_0}{\omega_0}{x}{A}{\psi}{\omega} \in \Sig \\
{\it \color{red}\substmap{\Psi}{\gamma}{\Psi'}} \\
{\it \color{red}\substmapap{\gamma}{\D_1'} = \D_1} \\
{\it \color{red}\latcstr{\dlabelr{\substmapap{\gamma}{\psi}}{\substmapap{\gamma}{\omega}} \sqsubseteq \dlabelr{d}{f}}} \\
{\it \color{red}\latcstr{f_0 \sqsubseteq \substmapap{\gamma}{\psi_0}}} \\
{\it \color{red}\latcstr{f_0 \sqsubseteq \substmapap{\gamma}{\omega_0}}} \\
\ptypbr{\D_2, x{:}A\dlabelr{\substmapap{\gamma}{\psi}}{\substmapap{\gamma}{\omega}}}{Q}{d_0}{f_0}{z}{C}{d}{f}}
{\ptypr{\D_1, \D_2}{\spwn{x}{\substmapap{\gamma}{\psi}}{\substmapap{\gamma}{\omega}}{X}{\D_1}{\substmapap{\gamma}{\psi_0}}{\substmapap{\gamma}{\omega_0}}{Q}}{d_0}{f_0}{z}{C}{d}{f}}

\inferrule*[right=$\one_R$]
{\strut}
{\ptypr{\cdot}{\cls{x^{\dlabelr{c}{e}}}}{c_0}{e_0}{x}{\one}{c}{e}}

\inferrule*[right=$\one_L$]
{{\it \color{red}\latcstr{\dlabelr{d_1}{f_1} = \dlabelr{c}{e} \sqcup \dlabelr{d_0}{f_0}}} \\
\ptypbr{\D}{Q}{d_1}{f_1}{z}{C}{d}{f}}
{\ptypr{\D, x{:}\one\dlabelr{c}{e}}{\wait{x^{\dlabelr{c}{e}}}{Q}}{d_0}{f_0}{z}{C}{d}{f}}
\end{mathpar}
\end{small}
\end{center}
\vspace{-10pt}
\caption{Process term typing rules of $\lang$.}
\label{fig:typing-rules}
\vspace{-15pt}
\end{figure*}

\begin{itemize}
  \item{$\oplus$:} There are two versions of the right rule for {\small$\oplus$}.
Both versions establish condition~(2) on sends without extra premises by the invariant~(b).
  The difference between the two versions lies in whether {\small$\latcstr{c=e}$} is derivable or not derivable ({\small$\latcstrnot{c=e}$}). If {\small$\latcstr{c=e}$} is derivable, then rule {\small$\oplus_{R_1}$} applies; if it is not, rule {\small$\oplus_{R_2}$} applies.
  In the former case, the client of {\small $x$}, on the receiving side, adjusts its running integrity to at least {\small $e{=}c$} upon receiving the sent message, and thus, it cannot regrade to a lower (or unrelated) level than {\small $c$}. 
  In the latter case, the min integrity {\small $e$} of the process is strictly lower than its max confidentiality {\small $c$}.
  This means that the client of {\small $x$} might, in fact, continue to have its running integrity as low as {\small $e\sqsubset c$} and, at some point in the future, drop its running confidentiality to {\small $e$} and start sending to channels with lower (or unrelated) confidentiality than {\small $c$}.
  The additional premise {\small $\forall i,j \in L. \,A_i=A_j$}  in  {\small$\oplus_{R_2}$} prevents potential leaks through different continuation protocols at that future point, i.e., it ensures that the client's future communications with channels of lower confidentiality level than {\small $c$} do not depend on the continuation protocol chosen now.

  The first premise of rule {\small $\oplus L$} updates the running integrity and confidentiality based on {\small $x$}'s security levels to enforce condition~(1) for receives.
  Moreover, as explained in~\Cref{sec:reclassification-nutshell}, the third premise invokes the pattern check pairwise for every two branches conditioned on the running confidentiality {\small$d_1$} and running integrity {\small$f_1$} after the receive.
  We detail the synchronization pattern check rules later in \Cref{sec:sync-patterns}.

  \item{$\&$:} The left and right rules for $\&$ are dual to $\oplus$, except that the sends in $\&_{L_1}$ and $\&_{L_2}$ have to be guarded by their second and third premises, \resp to ensure condition~(2) on sends.
  In $\&_{R}$, the updated running confidentiality and running integrity is equal to the max confidentiality and max integrity by invariant~(b).
  \item{$\otimes$, $\multimap$, $1$:} 
  The rules for the rest of the connectives use the same set of premises to ensure conditions~(1) and~(2).
  Rules $\otimes_{R}$ and $\multimap_{L}$, moreover, ensure that a channel can be sent over another channel only if they have the same security levels.
  \item{$\msc{fwd}$:} The forward rule requires that the security levels of the involved channels match.
  \item{$\msc{Spawn}$:} The rule relies on an order-preserving substitution $\substmap{\Psi}{\gamma}{\Psi'}$,
  guaranteeing that the security terms provided by the spawner comply with the order expected among those terms by the spawnee. 
  The substitution maps the security terms in the context in the signature to the one provided by the spawner, i.e., $\substmapap{\gamma}{\D_1'} = \D_1$.
The rule also establishes invariants~(a)-(c) for the newly spawned process via the premise
  $\latcstr{\dlabel{\substmapap{\gamma}{\psi}}{\substmapap{\gamma}{\omega}} \sqsubseteq \dlabel{d}{f}}$.
 The running confidentiality and the running integrity of the spawned process will result from applying the substitution to the corresponding levels in the signature, i.e., $\substmapap{\gamma}{\psi_0}$ and $\substmapap{\gamma}{\omega_0}$, \respb.
The premises $\latcstr{f_0 \sqsubseteq \substmapap{\gamma}{\psi_0}}$ and $\latcstr{f_0 \sqsubseteq \substmapap{\gamma}{\omega_0}}$ allow the newly spawned process to start its running confidentiality and integrity at least at the spawner's running integrity $f_0$.
This facilitates regrading to $f_0$ in case of a tail call.
  Note that $\latcstr{f_0 \sqsubseteq \substmapap{\gamma}{\omega_0}}$ prevents the spawnee from employing more pattern checks
  than the spawner because the spawnee would otherwise be affected by the spawners negligence.

\end{itemize}

{\bf Signature checking.} 
The syntax of process definitions in the signature is also enhanced with the security levels and is of the form
{\small$\Psi; \Delta \vdash X=P @ \langle \psi_0, \omega_0 \rangle{::} (z{:}A \langle \psi, \omega \rangle)$.}
\Cref{fig:signature} lists the signature checking rules.
Signature checking happens relative to a globally fixed security lattice $\Psi_0$ of concrete security levels.
Rule $\Sigma_3$ initiates type-checking of a process definition via its fifth premise and enforces invariants (a)-(c) on the process via the first four premises.

\begin{figure*}
\begin{center}
\begin{small}
\def \MathparLineskip {\lineskip=0.13cm}
\begin{mathpar}
\inferrule[$\msc{TVar}_R$]
{\typdef{Y}{A} \in \Sig \\
\ptypbr{\D}{P}{c_0}{e_0}{x}{A}{c}{e}}
{\ptypbr{\D}{P}{c_0}{e_0}{x}{Y}{c}{e}}
\and
\inferrule[$\msc{TVar}_L$]
{\typdef{Y}{A} \in \Sig \\
\ptypbr{\D, x{:}A\dlabelr{c}{e}}{Q}{d_0}{f_0}{z}{C}{d}{f}}
{\ptypbr{\D, x{:}Y\dlabelr{c}{e}}{Q}{d_0}{f_0}{z}{C}{d}{f}}
\and
\inferrule*[right=$\Sig_1$]
{\strut}
{\sigwfmd{(\cdot)}}
\and
\inferrule*[right=$\Sig_2$]
{\typwfmd{A} \\
\sigwfmd{\Sig'}}
{\sigwfmd{\typdef{Y}{A}, \Sig'}}
\and
\inferrule*[right=$\Sig_3$]
{\mathit{\color{red}\forall i \in \{1\dots n\}. \latcstr{\dlabelr{\psi_i}{\omega_i} \sqsubseteq \dlabelr{\psi}{\omega}}, \latcstr{\omega_i \sqsubseteq \psi_i}}\\
\mathit{\color{red}\latcstr{\dlabelr{\psi_0}{\omega_0} \sqsubseteq \dlabelr{\psi}{\omega}}}\\
\mathit{\color{red}\latcstr{\omega_0 \sqsubseteq \psi_0}}\\
\mathit{\color{red}\latcstr{\omega \sqsubseteq \psi}}\\
\ptypbr{y_1 {:} B_1\dlabelr{\psi_1}{\omega_1}, \dots, y_n {:} B_n\dlabelr{\psi_n}{\omega_n}}{P}{\psi_0}{\omega_0}{x}{A}{\psi}{\omega} \\
\sigwfmd{\Sig'}}
{\sigwfmd{\procdefbsr{y_1 {:} B_1\dlabelr{\psi_1}{\omega_1}, \dots, y_n {:} B_n\dlabelr{\psi_n}{\omega_n}}{X}{P}{\psi_0}{\omega_0}{x}{A}{\psi}{\omega}, \Sig'}}
\end{mathpar}
\end{small}
\end{center}
\vspace{-5pt}
\caption{Signature checking rules of $\lang$.}
\label{fig:signature}
\end{figure*}

\subsection{Synchronization patterns}\label{sec:sync-patterns}

To check synchronization patterns, we use the judgment {\small$\barbeq{P}{d}{f}{Q}$}, defined inductively in \Cref{fig:sync-patterns}.
The judgment states that process terms {\small$P$} and {\small$Q$} are \emph{synchronized} in terms of their communication pattern,
meaning that if {\small$P$} outputs along channel {\small$x$}, so must {\small$Q$}, and that if {\small$P$} inputs along channel {\small$x$}, so must {\small$Q$}, and vice versa.
The check is \emph{conditioned} on
the running confidentiality {\small$d$} and running integrity {\small$f$} of the recipient after branching, and is pairwise called for all branches of a {\small$\mb{case}$} statement.
\begin{figure*}
\begin{center}
\begin{small}
    \def \MathparLineskip {\lineskip=0.15cm}  
\begin{mathpar}
\inferrule[$\msc{Unsync}_1$]
{\latcstrnot{d \sqsubseteq f} \\
\latcstr{d \sqsubseteq e} \\
\barbeq{P}{d}{f}{Q}}
{\barbeq{\outact{x}{c}{e}{P}}{d}{f}{Q}}
\and
\inferrule[$\msc{Unsync}_2$]
{\latcstrnot{d \sqsubseteq f} \\
\latcstr{d \sqsubseteq e} \\
\barbeq{P}{d}{f}{Q}}
{\barbeq{P}{d}{f}{\outact{x}{c}{e}{Q}}}
\and
\inferrule*[right=$\msc{Unsync}_3$]
{\latcstr{d \sqsubseteq f}}
{\barbeq{P}{d}{f}{Q}}
\and
\inferrule*[right=$\msc{Unsync-Spawn}_1$]
{\latcstrnot{d \sqsubseteq f} \\
\latcstr{d \sqsubseteq e_0} \\
\forall y{:}B\dlabelb{c'}{e'} \in \D . \latcstr{d \sqsubseteq e'}\\
\barbeq{P}{d}{f}{Q}}
{\barbeq{\spwnb{x}{c}{e}{X}{\D}{c_0}{e_0}{P}}{d}{f}{Q}}
\and
\inferrule*[right=$\msc{Unsync-Spawn}_2$]
{\latcstrnot{d \sqsubseteq f} \\
\latcstr{d \sqsubseteq e_0} \\
\forall y{:}B\dlabelb{c'}{e'} \in \D . \latcstr{d \sqsubseteq e'}\\
\barbeq{P}{d}{f}{Q}}
{\barbeq{P}{d}{f}{\spwnb{x}{c}{e}{X}{\D}{c_0}{e_0}{Q}}}
\and
\inferrule[$\msc{SndLab}$]
{\latcstrnot{d \sqsubseteq f} \\
\latcstrnot{d \sqsubseteq e} \\
\barbeq{P}{d}{f}{Q}}
{\barbeq{\labsnd{x^{\dlabelb{c}{e}}}{k}{P}}{d}{f}{\labsnd{x^{\dlabelb{c}{e}}}{\ell}{Q}}}
\and
\inferrule[$\msc{RcvLab}$]
{\latcstrnot{d \sqsubseteq f} \\
\forall j \in I, k \in L. \barbeq{P_j}{d}{f \sqcup e}{Q_k}}
{\barbeq{\labrcvb{x^{\dlabelb{c}{e}}}{P}}{d}{f}{\labrcv{x^{\dlabelb{c}{e}}}{Q}}}
\and
\inferrule[$\msc{SndChn}$]
{\latcstrnot{d \sqsubseteq f} \\
\latcstrnot{d \sqsubseteq e} \\
\barbeq{P}{d}{f}{Q}}
{\barbeq{\chnsnd{y}{x^{\dlabelb{c}{e}}}{P}}{d}{f}{\chnsnd{y}{x^{\dlabelb{c}{e}}}{Q}}}
\and
\inferrule[$\msc{RcvChn}$]
{\latcstrnot{d \sqsubseteq f} \\
\barbeq{\subst{y}{y_1}{P}}{d}{f \sqcup e}{\subst{y}{y_2}{Q}}}
{\barbeq{\chnrcv{y_1}{x^{\dlabelb{c}{e}}}{P}}{d}{f}{\chnrcv{y_2}{x^{\dlabelb{c}{e}}}{Q}}}
\and
\inferrule*[right=$\msc{Sync-Spawn}$]
{\latcstrnot{d \sqsubseteq f}\\
\color{mytomato}{(\latcstrnot{d \sqsubseteq e_0} \; \m{or} \; \exists y{:}B\dlabelb{c'}{e'} \in \D . \latcstrnot{d \sqsubseteq e'})} \\
\barbeq{\subst{x}{x_1}{P}}{d}{f}{\subst{x}{x_2}{Q}}}
{\barbeq{\spwnb{x_1}{c}{e}{X}{\D}{c_0}{e_0}{P}}{d}{f}{\spwnb{x_2}{c}{e}{X}{\D}{c_0}{e_0}{Q}}}
\and
\inferrule*[right=$\msc{Fwd}$]
{\latcstrnot{d \sqsubseteq f}}
{\barbeq{\fwd{x^{\dlabelb{c_1}{e_1}}}{y^{\dlabelb{c_2}{e_2}}}}{d}{f}{\fwd{x^{\dlabelb{c_1}{e_1}}}{y^{\dlabelb{c_2}{e_2}}}}}
\\
\inferrule*[right=$\msc{Close}$]
{\latcstrnot{d \sqsubseteq f}}
{\barbeq{\cls{x^{\dlabelb{c}{e}}}}{d}{f}{\cls{x^{\dlabelb{c}{e}}}}}
\and
\inferrule*[right=$\msc{Wait}$]
{\latcstrnot{d \sqsubseteq f} \\
\barbeq{P}{d}{f \sqcup e}{Q}}
{\barbeq{\wait{x^{\dlabelb{c}{e}}}{P}}{d}{f}{\wait{x^{\dlabelb{c}{e}}}{Q}}}
\end{mathpar}
\end{small}
\end{center}
\vspace{-10pt}
\caption{Synchronization pattern checking rules of $\lang$.}
\vspace{-15pt}
\label{fig:sync-patterns}
\end{figure*}

Let us assume that right after branching, the known secret of a process (its running confidentiality) is of level {\small$d$}.
The goal of the synchronization pattern checks is to rule out any leakage of this secret of level $d$ via regrading.
Such leakage is only possible if the process (or any process that receives this secret from it) regrades to a lower or unrelated level than the secret $d$. However, if $d \sqsubseteq f$, we know that this can never happen.
Therefore, if {\small$\latcstr{d \sqsubseteq f}$}, the judgment {\small$\barbeq{P}{d}{f}{Q}$} trivially holds.
This case is handled by Rule {\small$\msc{Unsync}_3$} and is a base case of the inductive definition.

The interesting case is when $\latcstrnot{d \sqsubseteq f}$, meaning that  the process can potentially regrade to a lower (or unrelated) level than $d$. In this case, the rules have to ensure that the secret $d$ does not affect the ability of the process itself or the processes communicating with it to reach a regrading point. 
Furthermore, the secret $d$ cannot affect the continuation of the process after regrading.
In this case, the rules consider whether the next action in $P$ and $Q$ is a receive, send (except close), spawn, close, or forward:
\begin{itemize}
  \item The receives are checked to be synchronized in {\small$P$} and {\small$Q$} by the rules {\small$\msc{RcvLab}$} and {\small$\msc{RcvChan}$}. The pattern check is invoked inductively on the continuation, with updated running integrity ({\small$f \sqcup e$}) to take into account the receive.
   The confidentiality of the learned secret $d$, however, remains constant under inductive invocations as it has to continue preventing the leak of the original secret.
  The receives have to be synchronized as long as {\small$\latcstrnot{d \sqsubseteq f}$} holds, since different receives in {\small$P$} and {\small$Q$} might result in one branch reaching the regrading point and the other one not (related to non-reactiveness). 

  \item Different sends in two branches of a process does not impact whether or not the process itself reaches a regrading point (sends are non-blocking). But, it may impact whether or not the other process, on the receiving side, reaches the regrading point based on the secret.
If the carrier channel's min integrity $e$ is high enough, the receiving process cannot regrade to a level lower (or unrelated) than $d$, and we do not need to synchronize the sends. 
The sends must only be synchronized if the carrier channel's min integrity $e$ is not greater than or equal to the level $d$ of the secret ({\small$d \not\sqsubseteq e$}).
  Rules $\msc{Unsync}_1$ and $\msc{Unsync}_2$ correspond to the former case where   {\small$d \sqsubseteq e$}; for brevity, in these rules, we use process terms with any output prefix defined as
  {\small\(\outact{x}{c}{e}{P} \defined
  \labsnd{x^{\dlabel{c}{e}}}{k}{P} \alt
  \chnsnd{y}{x^{\dlabel{c}{e}}}{P}. \)}
  And rules $\msc{SndLab}$ and $\msc{SndChan}$ correspond to the latter where {\small$d \not\sqsubseteq e$}.
  In either case, the pattern check is invoked inductively on the continuation, with unchanged running integrity.
  \item Similar to the reasoning in the case of sends, if the running integrity of the spawned process and the min integrity of all its channels are high enough, there is no need to synchronize the spawns ($\msc{Unsync-Spawn}$ rules). Otherwise, the two branches must spawn the same processes with the same arguments ($\msc{Sync-Spawn}$).
  
  \item Rules $\msc{Close}$ and $\msc{Fwd}$ are the other base cases of the inductive definition. They insist that the two branches {\small$P$} and {\small$Q$} can synchronize their termination behavior.
  \end{itemize}

\subsection{Configuration typing and asynchronous dynamics}\label{sec:def-dynamics}
The configuration typing judgment is of the form 
{\small$\Psi_0; \Delta \Vdash \mathcal{C}:: \Delta'$},
where {\small$\Psi_0$} is the security lattice and {\small$\mathcal{C}$} is a set of runtime processes {\small $\mathbf{proc}(y_\alpha\langle c,e \rangle, P@ \langle c',e' \rangle)$} and messages {\small$\mathbf{msg}(M)$}.
\Cref{fig:config-typing} shows the configuration typing. The security premises in the {\small$\mathbf{proc}$} and {\small$\mathbf{msg}$} rules enforce the invariants (a)-(c) on the process term judgment before invoking process typing.

\begin{figure*}
\begin{center}
\begin{small}
    \def \MathparLineskip {\lineskip=0.15cm}
\begin{mathpar}
\inferrule*[right=$\mathbf{emp}$]
{\strut}
{{\it\color{red}\Psi_0}; x{:}A{\it\color{red}[\langle d,e \rangle]} \Vdash \cdot :: (x{:}A{\it\color{red}[\langle d,e \rangle]})}
\and
\inferrule*[right=$\mathbf{comp}$]
{{\it\color{red}\Psi_0}; \Delta_0 \Vdash \mathcal{C}:: \Delta \\
{\it\color{red}\Psi_0}; \Delta'_0 \Vdash \mc{C}_1:: x{:}A{\it\color{red}[\langle d, e\rangle]}}
{{\it\color{red}\Psi_0}; \Delta_0, \Delta'_0 \Vdash \mathcal{C}\, \mc{C}_1 :: \Delta, x{:} A{\it\color{red}[\langle d, e\rangle]}}
%
\and
\inferrule*[right=$\mathbf{proc}$]
{{\it\color{red}\Psi_0 \Vdash d_1 \sqsubseteq d} \\
{\it\color{red}\Psi_0 \Vdash e_1 \sqsubseteq e }\\
{\it\color{red}\forall y{:}B[\langle d',e' \rangle] \in \Delta'_0, \Delta \, (\Psi_0 \Vdash d' \sqsubseteq d)} \\
{\it\color{red}\forall y{:}B[\langle d',e' \rangle] \in \Delta'_0, \Delta \, (\Psi_0 \Vdash e' \sqsubseteq e)} \\
{\it\color{red}\Psi_0 \Vdash e_1 \sqsubseteq d_1} \\
{\it\color{red}\Psi_0 \Vdash e \sqsubseteq d} \\
{\it\color{red}\forall y{:}B[\langle d',e' \rangle] \in \Delta'_0, \Delta \, (\Psi_0 \Vdash e' \sqsubseteq d')} \\
{\it\color{red}\Psi_0}; \Delta_0 \Vdash \mathcal{C}:: \Delta \\
{\it\color{red}\Psi_0}; \Delta'_0, \Delta \vdash P{\it\color{red}@\langle d_1,e_1 \rangle}:: (x{:}A{\it\color{red}[\langle d,e \rangle]})}
{{\color{red}\Psi_0}; \Delta_0, \Delta'_0 \Vdash \mathcal{C}\, \mathbf{proc}(x{\color{red}[\langle d,e \rangle]}, P@{\color{red}\langle d_1,e_1 \rangle}):: (x{:}A{\color{red}[\langle d,e \rangle]})}
\and
\inferrule*[right=$\mathbf{msg}$]
{{\it\color{red}\forall y{:}B[\langle d',e' \rangle] \in \Delta'_0, \Delta \, (\Psi_0 \Vdash d' \sqsubseteq d)} \\
{\it\color{red}\Psi_0 \Vdash e \sqsubseteq d}\\
{\it\color{red}\forall y{:}B[\langle d',e' \rangle] \in \Delta'_0, \Delta \, (\Psi_0 \Vdash e' \sqsubseteq d')} \\
{\it\color{red}\Psi_0}; \Delta_0 \Vdash \mathcal{C}:: \Delta \\
{\it\color{red}\Psi_0}; \Delta'_0, \Delta \vdash M{\it\color{red}@\langle d,e \rangle}:: (x{:}A{\it\color{red}[\langle d,e \rangle]})}
{{\it\color{red}\Psi_0}; \Delta_0, \Delta'_0 \Vdash \mathcal{C}, \mathbf{msg}(M):: (x{:}A{\it\color{red}[\langle d,e \rangle]})}
\end{mathpar}
\end{small}
\end{center}
\vspace{-10pt}
\caption{Configuration typing rules of $\lang$.}

\label{fig:config-typing}
\end{figure*}

The dynamic rules in \Cref{fig:dynamics} take care of updating the running confidentiality and integrity of each process after a receive.  For brevity, we write {\small$\langle p \rangle$} to refer to a pair of confidentiality and integrity labels {\small$\langle c,e \rangle$}.
Rule {\small$\msc{Spawn}$} relies on the substitution mapping {\small$\gamma$} given by the programmer and its lifting {\small$\hat{\gamma}$} to the process term level.
It looks up the definition of process {\small$X$} in the signature and instantiates the security variables occurring in the process body using {\small$\gamma$}.
The condition {\small$\hat{\gamma}(\Psi')=\Psi_0$} ensures that all security variables are instantiated with a concrete security level.
For brevity, we omit a channel generations as well as {\small$\Sigma$}, which is fixed.

\begin{figure*}
\begin{center}
\begin{small}
\input{figs/dynamics}
 \vspace{-5pt}
\caption{Asynchronous dynamics of $\lang$.}
\label{fig:dynamics}
 \vspace{-15pt}
\end{small}
\end{center}
\end{figure*}

\subsection{Banking example}\label{sec:bank-examples}

The following example implements a bank that authorizes transactions made by its customers and sends a copy to their bank accounts. 
In line with our security lattice, we assume that the bank has two customers, Alice and Bob.
To authenticate themselves, a customer sends their token to the bank.
The bank then verifies the token and, if the verification is successful, sends the message $\mathit{succ}$ to the customer, otherwise the message $\mathit{fail}$.
Moreover, if the verification is successful, the bank creates a transaction statement and sends it to another process that represents the account of the customer in the bank.
Once done, the bank continues to serve the next customer by making a recursive call.
We assume that the bank alternates between its two customers, Alice and Bob, by making a mutually recursive call from $\m{Bank_A}$, which serves Alice, to $\m{Bank_B}$, which serves Bob, and vice versa. 
At each recursive call, a bank process regrades its running confidentiality to interact with the next customer.
The example showcases a characteristic feature of our type system: it accepts an implementation for a bank that interactively communicates with Alice and Bob without jeopardizing noninterference.

The following session types dictate the above protocol:

\begin{small}
\begin{minipage}[t]{0.5\textwidth}
\begin{tabbing}
$\mathsf{customer}= \oplus \{$\=$\mathit{tok_{black}}: \& \{\mathit{succ}: \mathsf{customer}, \mathit{fail}: \mathsf{customer}\},$\\
\>$\mathit{tok_{white}}: \& \{\mathit{succ}: \mathsf{customer}, \mathit{fail}: \mathsf{customer}\}\}$\\
\end{tabbing}
\end{minipage}
\hspace{8mm}
\begin{minipage}[t]{0.5\textwidth}
\begin{tabbing}
$\mathsf{account}= \mathsf{transfer} \multimap \mathsf{account}$\\
$\mathsf{transfer} = \oplus\{\mathit{transaction}: 1\}$
\end{tabbing}
\end{minipage}
\end{small}
%
%
\noindent \Cref{fig:poly-bank-impl} shows the process implementations {\small$\m{Bank_A}$, $\m{Bank_B}$}, {\small$\m{Customer_A}$}, and {\small$\m{Statement_A}$}.
The latter two are the implementation of Alice's customer and statement process, \respb.
The implementation of {\small$\m{Customer_A}$} is as expected.
{\small$\m{Statement_A}$} signals a single transfer by sending the label transaction and terminates.
The implementation of corresponding processes for Bob, i.e., {\small$\m{Customer_B}$}, and {\small$\m{Statement_B}$}, would be similar.
%
%
{The example is typed using the security theory {\small $\Psi$}, consisting of the concrete security lattice {\small$\Psi_0$}, the security variables {\small$\psi$} and {\small$\psi'$}, and the set of constraints {\small $\{\mb{guest} \sqsubseteq \psi \sqsubseteq \mb{bank}, \mb{guest} \sqsubseteq \psi' \sqsubseteq \mb{bank}\}$}. 
%
(See the appendix for the formal definition of a security theory.)}
To execute this program using the dynamics in \Cref{fig:dynamics}, we provide the order-preserving substitution {\small$\Psi_0 \Vdash \gamma' :: \Psi$},  defined as {\small$\gamma':=\{ \psi \mapsto \mb{alice}, \psi' \mapsto \mb{bob}\}$}.

\begin{figure*}
\centering
\begin{small}
\input{figs/poly-bank-impl.tex}
\end{small}
 \vspace{-3pt}
\caption{Security-polymorphic process definitions.}
\label{fig:poly-bank-impl}
 \vspace{-10pt}
\end{figure*}

Let us examine the pattern checks {\small$\sim_{\langle \psi,\mb{guest}\rangle}$} invoked by {\small$\mb{case}\, y_1(\dots)$ in $\mb{Bank}_A$},
relating the branches corresponding to black and white tokens.
The sends along {\small$y_1$} match in both branches, as demanded by \msc{SndLab} (since {\small$\Psi \not \Vdash \psi\, {\sqsubseteq}\, \mb{guest}$}),
even though the sent labels are not the same.
The unsynchronized spawn and send along {\small$w_1$} is verified by \msc{Unsync-Spawn$_1$} and \msc{Unsync$_1$}, resp., since {\small$\Psi \Vdash \psi \sqsubseteq \psi$}.
The matching tail calls are verified with \msc{Sync-Spawn}.

\section{Progress-sensitive noninterference}
\label{sec:psnix}
This section presents our main result, PSNI, which we prove using a logical relation.
\subsection{Attacker model}
\label{subsubsec:attacker-model}
The attacker model assumes a configuration \mit{$\mathcal{D}$} with prior annotation of its free channels with security levels, the attacker's confidentiality level \mit{$\xi$}, and a nondeterministic scheduler. The attacker knows the source code of \mit{$\mathcal{D}$},  can only observe the messages sent along the free channels of \mit{$\mathcal{D}$} with confidentiality level {\small$c \sqsubseteq \xi$}, and cannot measure the passing of time.

\subsection{Noninterference via an integrity logical relation}
Noninterference amounts to a process
equivalence up to the confidentiality level $\xi$ of an observer.
In a message-passing system, it boils down to an equivalence of a configuration with interacting processes.
This section focuses on noninterference for tree-shaped configurations. The definition can be extended to forests by enforcing pairwise relation between their trees.

An open configuration {\small$\Psi_0; \Delta \Vdash \mc{D}:: x_\alpha{:}A \dlabel{c}{e}$} has the free channels {\small$\Delta$} and {\small$x_\alpha$} to communicate with its external environment; it sends outgoing messages to and receives incoming messages from the environment along these free channels. 
Two observationally equivalent configurations may only differ in outgoing messages of confidentiality level {\small$c_o \not\sqsupseteq \xi$}, assuming that the incoming messages of confidentiality level {\small$c_i \sqsubseteq \xi$} are the same.
We introduce a \emph{logical relation} that captures this idea and accounts for integrity and regrading policies.

The logical relation relates two open configurations {\small$\mnode_1$} and {\small$\mnode_2$}---the two runs of the program under consideration---and asserts that {\small$\mnode_1$} and {\small$\mnode_2$} send related messages to the environment, if they receive related messages from the environment.
The term interpretation of the logical relation, defined in \Cref{fig:rel_term}, allows the first configuration {\small$\mnode_1$} to step internally until the configuration is ready to send or receive a message across at least one external channel. Then, it requires the second configuration {\small$\mnode_2$} to step internally so that the resulting configurations are in the value interpretation of the logical relation, defined in \Cref{fig:rel_value_left} and \Cref{fig:rel_value_right}. We call the external channels, e.g., {\small${\color{mygreen}\Delta} \Vdash {\color{mygreen}K}$} in \Cref{fig:rel_term}, the interface of {\small$\mnode_1$} and {\small$\mnode_2$}. The metavariable $K$ stands for either $x_{\alpha}{:}A \langle c,e \rangle$ or simply $\_{:}1\langle \top, \top \rangle$ which refers to an arbitrary unobservable channel.

\begin{figure*}
    {\footnotesize
    \[\begin{array}{lcl}
    ({\mc{B}_1}, {\mc{B}_2}) \in \mc{E}^\xi_{ \Psi_0}\llbracket \Delta \Vdash K \rrbracket^{m+1} &
    \m{iff} &
     (\mc{D}_1;\mc{D}_2)\in \m{Tree}_{\Psi_0}(\Delta \Vdash K)\,\m{and} \forall \,\Upsilon_1,\, \Theta_1, \, \mc{D}'_1.\,\m{if}\; \mc{D}_1\mapsto^{{*}_{\Upsilon_1; \Theta_1}} \; \mc{D}'_1\,  \m{then}\\
    &&\exists\, \Upsilon_2\,\mc{D}'_2.\, \m{\,such \, that\,}\,\mc{D}_2\mapsto^{{*}_{ \Upsilon_2}}\mc{D}'_2 \; \m{and}\; \Upsilon_1 \subseteq \Upsilon_2\, \m{and}\\
    && \forall\, y_\alpha \in \mb{Out}(\Delta \Vdash K).\, \mb{if}\, y_\alpha \in \Upsilon_1.\, \mb{then}\, ({\mc{D}'_1}; {\mc{D}'_2})\in \mc{V}^\xi_{\Psi_0}\llbracket  \Delta \Vdash K \rrbracket_{\cdot;y_\alpha}^{m+1}\,\m{and}\,\\[2pt]
    && \forall\, y_\alpha \in \mb{In}(\Delta \Vdash K). \mb{if}\, y_\alpha \in \Theta_1.\, \mb{then}\, ({\mc{D}'_1}; {\mc{D}'_2})\in \mc{V}^\xi_{\Psi_0}\llbracket  \Delta \Vdash K \rrbracket_{y_\alpha;\cdot}^{m+1}\\[4pt]
      
    ({\mc{B}_1}, {\mc{B}_2}) \in \mc{E}^\xi_{ \Psi_0}\llbracket \Delta \Vdash K \rrbracket^{0} &
    \m{iff}&
    (\mc{D}_1; \mc{D}_2)\in \m{Tree}_{\Psi_0}(\Delta \Vdash K)
        \end{array}\]
    }
    \vspace{-5pt}
    \caption{Term interpretation of logical relation.}
    \label{fig:rel_term}
    \vspace{-10pt}
    \end{figure*}
    
\begin{figure*}
    %
    \footnotesize
    \begin{mathpar}
    \begin{array}{ll}
      (l_1)&  (\mc{D}_1; \mc{D}_2)\in \mc{V}^\xi_{ \Psi_0}\llbracket \Delta, y_\alpha{:}1\langle c, e \rangle \Vdash K\rrbracket_{y^\alpha; \cdot}^{m+1}\\
   & \m{iff} \quad 
   \quad
    (\mc{D}_1; \mc{D}_2) \in \m{Tree}_{\Psi_0}(\Delta, y_\alpha{:}1\langle c, e \rangle \Vdash K)\, \m{then} \\
    & \;\quad \qquad (\mathbf{msg}( \mathbf{close}\,y_\alpha^{\langle c, e \rangle})\mc{D}_1; \mathbf{msg}( \mathbf{close}\,y_\alpha^{\langle c, e \rangle} )\mc{D}_2)\in \mathcal{E}^\xi_{\Psi_0}\llbracket \Delta \Vdash K\rrbracket^{m}    \\[4pt]
    %
    %
    (l_2) & (\mc{D}_1;\mc{D}_2) \in \mathcal{V}^\xi_{ \Psi_0}\llbracket \Delta, y_\alpha:\oplus\{\ell{:}A_\ell\}_{\ell \in I}\langle c, e \rangle \Vdash K\rrbracket_{y^\alpha; \cdot}^{m+1}\\
    & \m{iff} \quad 
    \quad
    (\mc{D}_1;\mc{D}_2) \in \m{Tree}_{\Psi_0}(\Delta, y_\alpha{:}\oplus\{\ell{:}A_\ell\}_{\ell \in I}\langle c, e \rangle \Vdash K)\, \m{and}\, \forall k_1,k_2 \in I. \m{if} (c \sqsubseteq \xi \rightarrow k_1=k_2)\, \m{then}\\
   &\;\quad \qquad (\mathbf{msg}(y_\alpha^{\langle c, e \rangle}.k_1)\mc{D}_1; \mathbf{msg}(y_\alpha^{\langle c, e \rangle}.k_2)\mc{D}_2) \in  \mathcal{E}^\xi_{ \Psi_0}\llbracket \Delta, y_{\alpha+1}{:}A_{k_1}{\langle c, e \rangle} \Vdash K \rrbracket^{m}    \\[4pt]
      %
      %
        %
        %
      (l_3) &   (\mc{D}_1;\mc{D}_2) \in \mathcal{V}^\xi_{ \Psi_0}\llbracket \Delta, y_\alpha{:}\&\{\ell{:}A_\ell\}_{\ell \in I}\langle c, e \rangle \Vdash K \rrbracket_{\cdot; y^\alpha}^{m+1}\\
      & \m{iff} \quad 
      \quad
       (\mc{D}_1;\mc{D}_2) \in \m{Tree}_{\Psi_0}(\Delta, y_\alpha{:}\&\{\ell{:}A_\ell\}_{\ell \in I}\langle c, e \rangle \Vdash K)\, \m{and}\, \exists k_1,k_2 \in I.  (c \sqsubseteq \xi \rightarrow k_1=k_2)\, \m{and}\,\\
      &\;\quad \qquad \mathcal{D}_1=\mathbf{msg}( y_\alpha^{\langle c, e \rangle}.k_1) \mathcal{D}'_1 \, \m{and}\ \mathcal{D}_2=\mathbf{msg}( y_\alpha^{\langle c, e \rangle}.k_2)\mathcal{D}'_2\,  \,\m{and} \\
     &\;\quad \qquad ( \mc{D}'_1; \mc{D}'_2)
      \in  \mathcal{E}^\xi_{ \Psi_0}\llbracket \Delta, y_{\alpha+1}{:}A_{k_1}{\langle c, e \rangle} \Vdash K \rrbracket^{m}     \\[4pt]
    %
        %
    (l_4) & (\mc{D}_1;\mc{D}_2)  \in  \mathcal{V}^\xi_{ \Psi_0}\llbracket \Delta, y_\alpha{:}A\otimes B\langle c, e \rangle \Vdash K \rrbracket_{y^\alpha; \cdot}^{m+1} \\
    & \m{iff} \quad 
    \quad
    (\mc{D}_1;\mc{D}_2) \in \m{Tree}_{\Psi_0}(\Delta, y_\alpha{:}A\otimes B\langle c, e \rangle \Vdash K)\,\m{and} \forall x_\beta {\not \in} \mathit{dom}(\Delta,y_\alpha{:} A \otimes B \langle c,e \rangle, K).\\
     &\;\quad \qquad (\mathbf{msg}( \mb{send}x_\beta^{\langle c, e \rangle},y_\alpha^{\langle c, e \rangle})\mc{D}_1; \mathbf{msg}( \mb{send}x_\beta^{\langle c, e \rangle},y_\alpha^{\langle c, e \rangle})\mc{D}_2)  \in  \mathcal{E}^\xi_{ \Psi_0}\llbracket \Delta,x_\beta{:}A\langle c, e \rangle, y_{\alpha+1}{:}B\langle c, e \rangle \Vdash K \rrbracket^{m}  \\[4pt]
     %
     %
    (l_5) & (\mc{D}_1;\mc{D}_2)  \in \mathcal{V}^\xi_{\Psi_0}\llbracket \Delta', \Delta'', y_\alpha{:}A\multimap B\langle c, e \rangle \Vdash K \rrbracket_{\cdot;y^\alpha}^{m+1}\\
    & \m{iff} \quad 
    \quad
    (\mc{D}_1;\mc{D}_2) \in \m{Tree}_{\Psi_0}(\Delta', \Delta'', y_\alpha{:}A\multimap B\langle c, e \rangle \Vdash K)\,\m{and}\,\\
    & \;\quad \qquad \mathcal{D}_1=\mathcal{T}_1\mathbf{msg}( \mb{send}x_\beta^{\langle c, e \rangle}\,y_\alpha^{\langle c, e \rangle})\,\mathcal{D}''_1\, \m{and} \,\m{for}\, \mc{T}_1 \in \m{Tree}_{\Psi_0}(\Delta' \Vdash x_\beta{:}A\langle c, e \rangle)\\
    & \;\quad \qquad \mathcal{D}_2=\mathcal{T}_2\mathbf{msg}( \mb{send}x_\beta^{\langle c, e \rangle}\,y_\alpha^{\langle c, e \rangle})\,\mc{D}''_2\, \m{and}  \,\m{for}\, \mc{T}_2 \in \m{Tree}_{\Psi_0}(\Delta' \Vdash x_\beta{:}A\langle c, e \rangle)\,\m{and} \\
    & \;\quad \qquad (\mc{T}_1; \mc{T}_2) \in  \mathcal{E}^\xi_{ \Psi_0}\llbracket \Delta' \Vdash x_\beta{:}A\langle c, e \rangle \rrbracket^{m} \,\m{and}\\
    &\;\quad \qquad(\mathcal{D}''_1;\mathcal{D}''_2) \in  \mathcal{E}^\xi_{ \Psi_0}\llbracket \Delta'', y_{\alpha+1}{:}B\langle c, e \rangle \Vdash K \rrbracket^{m}  
    \end{array}
    \end{mathpar}
    %
   \vspace{-5pt} 
   \caption{Value interpretation of logical relation for \emph{left} communications.}
    \label{fig:rel_value_left}
    \vspace{-10pt}
    \end{figure*}

\begin{figure*}
{\footnotesize%
\[\begin{array}{ll}
(r_1)& ({\mc{D}_1 ;\mc{D}_2})\in \mc{V}^\xi_{ \Psi_0}\llbracket \cdot \Vdash y_\alpha{:}1\langle c,e \rangle\rrbracket_{\cdot;y^\alpha}^{m+1}\\
&\;\m{iff}\quad 
\quad (\mc{D}_1; \mc{D}_2 ) \in \m{Tree}_{\Psi_0}(\cdot \Vdash y_\alpha)\, \m{and} \,
\mc{D}_1=\mathbf{msg}(\mathbf{close}\,y_\alpha^{\langle c,e \rangle} ) \, \m{and}\, \mc{D}_2=\mathbf{msg}( \mathbf{close}\,y_\alpha^{\langle c,e \rangle})\\[2pt]
(r_2)&(\mc{D}_1;\mc{D}_2)\in \mathcal{V}^\xi_{\Psi_0}\llbracket (\Delta \Vdash y_\alpha{:}\oplus\{\ell{:}A_\ell\}_{\ell \in I}\langle c,e \rangle)\rrbracket_{\cdot;y^\alpha}^{m+1} \\
&\;\m{iff} \quad \quad
(\mc{D}_1;\mc{D}_2) \in \m{Tree}_{\Psi_0}(\Delta \Vdash y_\alpha:\oplus\{\ell{:}A_\ell\}_{\ell \in I}\langle c,e \rangle)\, \m{and} \exists k_1, k_2 \in I.\,\, (c \sqsubseteq \xi \rightarrow k_1=k_2) \\
&\;\quad \qquad\, \mathcal{D}_1=\mathcal{D}'_1\mathbf{msg}( y_\alpha^{\langle c,e \rangle}.k_1) \, \m{and}\, \mathcal{D}_2=\mathcal{D}'_2 \mathbf{msg}(y_\alpha^{\langle c,e \rangle}.k_2)\,\\
&\;\quad \qquad\, \m{and}\, ( {\mc{D}'_1}; \mc{D}'_2) \in  \mathcal{E}^\xi_{ \Psi_0}\llbracket \Delta \Vdash y_{\alpha+1}{:}A_{k_1}\langle c,e \rangle\rrbracket^{m}\\[2pt]
(r_3) &(\mc{D}_1;\mc{D}_2) \in \mathcal{V}^\xi_{ \Psi_0}\llbracket \Delta \Vdash y_\alpha{:}\&\{\ell{:}A_\ell\}_{\ell \in I}\langle c,e \rangle\rrbracket_{y^\alpha; \cdot}^{m+1} \\
&\;\m{iff} \quad \quad
(\mc{D}_1;\mc{D}_2) \in \m{Tree}_{\Psi_0}(\Delta \Vdash y_\alpha{:}\&\{\ell{:}A_\ell\}_{\ell \in I}\langle c,e \rangle)\,\m{then}\forall k_1, k_2 \in I. \m{if}\, (c \sqsubseteq \xi \rightarrow k_1=k_2)\,  \m{then}\,\\
&\; \quad \qquad  (\mc{D}_1 \mathbf{msg}( y_\alpha^{\langle c,e \rangle}.k_1) , \mc{D}_2\mathbf{msg}( y_\alpha^{\langle c,e \rangle}.k_2))\in  \mathcal{E}^\xi_{\Psi_0}\llbracket \Delta \Vdash y_{\alpha+1}{:}A_{k_1}\langle c,e \rangle\rrbracket^{m}\\[2pt]
(r_4) & (\mc{D}_1;\mc{D}_2)  \in \mathcal{V}^\xi_{ \Psi_0}\llbracket \Delta',\Delta'' \Vdash y_\alpha{:}A\otimes B\langle c,e \rangle\rrbracket_{\cdot; y^\alpha}^{m+1} \\
&\;\m{iff} (\mc{D}_1;\mc{D}_2) \in \m{Tree}_{\Psi_0}(\Delta', \Delta'' \Vdash y_\alpha{:}A\otimes B\langle c,e \rangle)\,\m{and}\, \exists x_\beta.\\
&\; \quad \qquad
 \mathcal{D}_1=\mathcal{D}'_1\mathcal{T}_1\mathbf{msg}( \mb{send}\,x_\beta^{\langle c,e \rangle}\,y_\alpha^{\langle c,e \rangle} )\, \m{for}\, \mc{T}_1 \in \m{Tree}_{\Psi_0}(\Delta'' \Vdash x_\beta{:}A\langle c,e \rangle)\, \m{and}\\
&\; \quad \qquad \mathcal{D}_2=\mathcal{D}'_2\mc{T}_2 \mathbf{msg}( \mb{send}\,x_\beta^{\langle c,e \rangle}\,y_\alpha^{\langle c,e \rangle})\,\m{for}\, \mc{T}_2 \in \m{Tree}_{\Psi_0}(\Delta'' \Vdash x_\beta{:}A\langle c,e \rangle)\, \m{and}\\
&\; \quad \qquad
 (\mc{T}_1;\mc{T}_2)  \in  \mathcal{E}^\xi_{ \Psi_0}\llbracket \Delta'' \Vdash x_{\beta}{:}A\langle c,e \rangle\rrbracket^{m}  \,\m{and}\\
 &\; \quad \qquad
 (\mc{D}'_1;\mc{D}'_2)  \in  \mathcal{E}^\xi_{ \Psi_0}\llbracket \Delta' \Vdash y_{\alpha+1}{:}B\langle c,e \rangle\rrbracket^{m}  \\[2pt]
(r_5) & (\mc{D}_1;\mc{D}_2)  \in \mathcal{V}^\xi_{ \Psi_0}\llbracket \Delta \Vdash y_\alpha{:}A\multimap B\langle c,e \rangle\rrbracket_{y^\alpha; \cdot}^{m+1}\\
& \; \m{iff} \quad \quad
(\mc{D}_1;\mc{D}_2) \in \m{Tree}_{\Psi_0}(\Delta \Vdash y_\alpha{:}A\multimap B\langle c,e \rangle)\, \m{and}\,  \forall x_\beta {\not\in} \mathit{dom}(\Delta,y_\alpha{:}A \multimap B\langle c,e \rangle).\\
& \;\quad \qquad (\mc{D}_1\mathbf{msg}( \mb{send}\,x_\beta^{\langle c,e \rangle}\,y_\alpha^{\langle c,e \rangle}); \mc{D}_2\mathbf{msg}(\mb{send}\,x_\beta^{\langle c,e \rangle}\,y_\alpha^{\langle c,e \rangle}))  \in  \mathcal{E}^\xi_{ \Psi_0}\llbracket \Delta, x_\beta{:}A\langle c,e \rangle\Vdash y_{\alpha+1}{:}B\langle c,e \rangle\rrbracket^{m} 
   \end{array}\]%
}
\vspace{-3pt}
\caption{Value interpretation of logical relation for \emph{right} communications.}
\label{fig:rel_value_right}
\vspace{-10pt}
\end{figure*}

The idea is to build an interface consisting of those external channels of the configurations that may impact the attacker's observations. As such, not only do we need to include the observable channels, i.e., with confidentiality level {\small$c \sqsubseteq \xi$}, in the interface, but also those with higher integrity than the observer, i.e., with integrity level {\small$e \sqsubseteq \xi$}. After all, if a channel's integrity is high enough (and thus its level is low), the messages along it may affect an observable outcome via synchronization patterns.
We call such an interface \emph{integrity interface} since low-confidentiality channels are all high-integrity by typing.

To build an \emph{integrity interface} for {\small$\mnode_1$} and {\small$\mnode_2$}, we close off their external low-integrity ({\small$e\not \sqsubseteq \xi$}) channels on the left by composing the channels with any well-typed provider and on the right with any well-typed client. We may use different low-integrity clients and providers to compose with each program run. These clients/providers can send different and unsynchronized messages along their high-confidentiality and low-integrity connections to {\small$\mnode_1$} and {\small$\mnode_2$}. The term interpretation is designed to ensure that well-typed configurations do not leak these different messages to the attacker. \Cref{fig:rel_eq} defines an equivalence relation between two configurations based on this idea: it composes them with low-integrity providers/clients and calls the term interpretation symmetrically on the compositions.
\noindent  In the definition, we use the projection function to build the integrity interface, e.g., {\small$\Delta {\Downarrow^{\m{ig}}} \xi$} projects out the channels {\small $y_\beta{:}A\langle c,e \rangle \in \Delta$} with $\xi \not \sqsubseteq e$. 
The predicate {\small$\mc{D}_1\in \m{Tree}_{\Psi_0}({\color{mygreen}\Delta} \Vdash {\color{mygreen}K})$} indicates that the configuration {\small$\mnode_1$} is well-typed.
 In the term and value interpretations, we generalize this predicate to the binary case, {\small$(\mc{D}_1; \mc{D}_2)\in \m{Tree}_{\Psi_0}({\color{mygreen}\Delta} \Vdash {\color{mygreen}K})$} indicating that both {\small$\mnode_1$} and {\small$\mnode_2$} are of the same type.

\begin{figure*}
{\small    \[
    \begin{array}{lll}
        (\Delta_1 \Vdash \mc{D}_1:: x_\alpha {:}A_1\langle c_1, e_1 \rangle) \equiv^{\Psi_0}_{\xi} (\Delta_2 \Vdash \mc{D}_2:: y_\beta {:}A_2\langle c_2,e_2 \rangle)  \;\;\m{iff}\\ 
       \;\; 
       {\mc{D}_1} \in \m{Tree}({\Delta_1} \Vdash x_\alpha {:}A_1\langle c_1, e_1 \rangle)\,\, \m{and}\,\,{\mc{D}_2} \in \m{Tree}({\Delta_2} \Vdash y_\beta {:}A_2\langle c_2, e_2 \rangle)\,\, \m{and}\,\,
      \Delta_1 {\Downarrow^{\m{ig}}} \xi= \Delta_2 {\Downarrow^{\m{ig}}} \xi=\Delta\,\, \m{and} \\ 
      \;\; x_\alpha {:}A_1\langle c_1, e_1 \rangle \Downarrow^{\m{ig}} \xi= y_\beta {:}A_2\langle c_2, e_2 \rangle \Downarrow^{\m{ig}} \xi= K\,\,\m{and}\,\,
      \forall \, \mc{B}_1 \in \mb{L\text{-}IProvider}^\xi(\Delta_1).\, \forall \, \mc{B}_2 \in \mb{L\text{-}IProvider}^\xi(\Delta_2).\\ 
       \;\;\forall \mc{T}_1 \in \mb{L\text{-}IClient}^\xi(x_\alpha {:}A_1\langle c_1, e_1 \rangle). \,\forall \mc{T}_2 \in \mb{L\text{-}IClient}^\xi(y_\beta {:}A_2\langle c_2, e_2 \rangle).\\
     \;\; \forall\,m.\,(\mc{B}_1\mc{D}_1\mc{T}_1, \mc{B}_2\mc{D}_2\mc{T}_2) \in \mc{E}^\xi_{\Psi_0}\llbracket{\Delta} \Vdash {K} \rrbracket^{m},\,\,
      \m{and}\,\,
      \forall\,m.\,(\mc{B}_2\mc{D}_2\mc{T}_2, \mc{B}_1\mc{D}_1\mc{T}_1) \in \mc{E}^\xi_{\Psi_0}\llbracket {\Delta} \Vdash {K} \rrbracket^{m}.
    \end{array}  
    \]
    \[\begin{array}{lll}
        \cdot \,\,\in \mb{L\text{-}IProvider}^\xi(\cdot)\\
        \mc{B} \in \mb{L\text{-}IProvider}^{\xi}(\Delta, x_\alpha{:}A\langle c, e \rangle )\, \,\; \m{iff}\\
        \quad e \not \sqsubseteq \xi  \, \m{and}\,
        \mc{B} = \mc{B}' \mc{T} \,\, \m{and}\,\, \mc{B}' \in \mb{L\text{-}IProvider}^\xi(\Delta)\,\,\m{and}\,\, \mc{T} \in \m{Tree}(\cdot \Vdash x_\alpha{:}A\langle c, e \rangle), \mb{or}\\
         \quad e\sqsubseteq \xi  \, \m{and}\,
       \mc{B} \in \mb{L\text{-}IProvider}^\xi(\Delta)\\[5pt]
        \mc{T} \in \mb{L\text{-}IClient}^\xi(x_\alpha{:}A\langle c, e \rangle)\, \,\;  \m{iff} \,\;\\
        \quad  e \not \sqsubseteq \xi  \, \m{and}\, \mc{T} \in \m{Tree}(x_\alpha{:}A\langle c, e \rangle \Vdash \_:1\langle \top, \top\rangle),\,\, \mb{or}\,\, e \sqsubseteq \xi  \, \m{and}\, \mc{T}= \cdot
      \end{array}\]
   }
   \vspace{-5pt}
   \caption{Logical equivalence.}
   \vspace{-10pt}
    \label{fig:rel_eq}
    \end{figure*}

The term interpretation allows stepping configuration  {\small$\mc{D}_1\mapsto^{*_{ \Upsilon_1; \Theta_1}}{\mc{D}'_1}$}
    \noindent  by iterated application of the rewriting rules defined in \Cref{fig:dynamics}.
    The star expresses that zero to multiple internal steps can be taken.
    The superscripts {\small$\Upsilon_1; \Theta_1$} denote two sets of channels occurring in the interface {\small${\color{mygreen}\Delta} \Vdash {\color{mygreen}K}$}.
    The set {\small$\Theta_1$} collects the \emph{incoming} channels, \ie channels that a process in $\mnode_1$ is ready to receive from, and the set {\small$\Upsilon_1$} collects the \emph{outgoing} channels, \ie channels with a message in {\small$\mnode_1$} ready to be sent.
    Assuming that {\small$\mnode_1$} steps to  {\small$\mc{D}'_1$}, generating the outgoing channels {\small$\Upsilon_1$}, {\small$\mnode_2$} must be stepped {\small$\mc{D}_2\mapsto^{*_{ \Upsilon_2}}{\mc{D}'_2}$} to produce at least the same set of outgoing channels, i.e., the set  {\small$\Upsilon_2$} such that {\small$\Upsilon_1 \subseteq \Upsilon_2$}.
The term interpretation then calls the value interpretation on the resulting configurations {\small$\mc{D}_1',\mc{D}_2'$} \emph{for every} channel that has a message ready for transmission in $\mc{D}'_1$, and thus $\mc{D}'_2$, and \emph{for every} channel that has a process waiting for a message in $\mc{D}'_1$.
Insisting on $\Upsilon_2$ being a superset of $\Upsilon_1$ ensures progress-sensitive noninterference without timing attacks:
if a configuration produces observable messages along a set of channels, the other configuration has to be able to produce the equivalent set of messages with zero or some internal steps.
The term interpretation uses focus channels as a subscript to the value interpretation to support simultaneous communications --- when there are multiple messages ready to be sent or received along channels in the interface.
The subscript \mit{$\cdot; y_\alpha$} indicates that \mit{$y_\alpha \in \Upsilon_1$} and \mit{$y_\alpha; \cdot$} that \mit{$y_\alpha \in \Theta_1$}.

The value interpretation accounts for every message sent from or received by {\small$\mnode_1$} and {\small$\mnode_2$},
amounting to two cases per connective: one for a message exchanged along a channel in {\small${\color{mygreen}K}$} and one for a message exchanged along a channel in {\small${\color{mygreen}\Delta}$}.
We refer to the former as communications to the \emph{right} (\Cref{fig:rel_value_right}) and the latter as communications to the \emph{left} (\Cref{fig:rel_value_left}).
The value interpretation generally establishes the following pattern: it asserts relatedness of outgoing messages, but assumes relatedness of incoming messages.
For example, $\&$ on the left ($l_3$ in \Cref{fig:rel_value_left}) \emph{asserts} the sending of related messages and pushes the messages into the environment, yielding {\small$\mnode_1'$}, {\small$\mnode_2'$}.
Now,
{\small$\mnode_1'$}, {\small$\mnode_2'$},  can each step internally, \eg to consume the incoming messages, requiring them to be in the term interpretation. 
On the other hand, $\&$
on the right ($r_{3}$ in \Cref{fig:rel_value_right}) \emph{assumes} receipt of related messages and pushes the messages into the configurations {\small$\mnode_1$} and {\small$\mnode_2$}.
\emph{Relatedness} for messages is determined by how they can impact the attacker's observations. If their carrier channel is observable to the attacker, i.e., has confidentiality level {\small$c \sqsubseteq \xi$}, then related messages must have the same labels. But if the channel only affects the attacker's observations via synchronization patterns, related messages may have different labels. The clause $c \sqsubseteq \xi \rightarrow k_1=k_2$ in the value interpretation conveys this, enforcing equality of the communicated labels only if the channel is observable.

Relatedness for higher-order types (\mit{$\chanoutsymb$} and \mit{$\chaninsymb$}) is a bit more subtle.
In particular, it requires  future observations along the exchanged channels to be related.
For example, $l_5$ in \Cref{fig:rel_value_left} for \mit{$\multimap$}-left asserts
existence of a message \mit{$\mathbf{msg}( \mb{send}x_\beta^{\langle c, e \rangle}\,y_\alpha^{\langle c, e \rangle})$}
and of subtrees \mit{$\mathcal{T}_1$} and \mit{$\mathcal{T}_2$}
in \mit{$\mc{D}_1$} and \mit{$\mc{D}_2$}.
The clause comprises two invocations of the term relation,
\mit{$ (\mc{T}_1; \mc{T}_2) \in  \mathcal{E}^\xi_{ \Psi_0}\llbracket \Delta' \Vdash x_\beta{:}A\langle c, e \rangle \rrbracket^{m}$},
asserting that future observations to be made along the sent channel \mit{$x_\beta$} are related, and
$(\mathcal{D}''_1;\mathcal{D}''_2) \in  \mathcal{E}^\xi_{ \Psi_0}\llbracket \Delta'', y_{\alpha+1}{:}B\langle c, e \rangle \Vdash K \rrbracket^{m}$,
asserting that the continuations \mit{$\mc{D}''_1$} and \mit{${\mc{D}''_2}$} are related.
Conversely, $r_5$ in \Cref{fig:rel_value_right} for \mit{$\multimap$}-right assumes
receipt of a message $\mathbf{msg}( \mb{send}x_\beta^{\langle c, e \rangle}\,y_\alpha^{\langle c, e \rangle})$ and invokes the term relation
{\small$(\mc{D}_1\mathbf{msg}( \mb{send}\,x_\beta^{\langle c,e \rangle}\,y_\alpha^{\langle c,e \rangle}); \mc{D}_2\mathbf{msg}(\mb{send}\,x_\beta^{\langle c,e \rangle}\,y_\alpha^{\langle c,e \rangle}))  \in  \mathcal{E}^\xi_{ \Psi_0}\llbracket \Delta, x_\beta{:}A\langle c,e \rangle\Vdash y_{\alpha+1}{:}B\langle c,e \rangle\rrbracket^{m}$}.

As we support general recursive types, we need an index to stratify our logical relation~\cite{AhmedESOP2006,AppelMcAllesterTOPLAS2001}. 
We tie our index to the number of \emph{observations} that can be made along the interface {\small${\color{mygreen}\Delta} \Vdash {\color{mygreen}K}$}, as suggested in~\cite{BalzerARXIV2023}.
We thus bound the value and term interpretation of our logical relation by the number of observations {\small$m$}, for {\small$m \geq 0$}, and attach them as superscripts to the relation's interface {\small${\color{mygreen}\Delta} \Vdash {\color{mygreen}K}$}.
The base case of the term interpretation, i.e., {\small$m=0$}, is a trivial relation.

The fundamental theorem states that any well-typed $\lang$ configuration is equivalent to itself up to the level of an arbitrary observer.
    \begin{theorem}[Fundamental theorem]\label{lem:reflexivity}
        For all security levels {\small$\xi$}, and a well-typed configuration {\small\({\Psi_0; \Delta \Vdash \mathcal{D}:: u_\alpha {:}T\langle c,e \rangle}\)}  we have {\small\((\Delta \Vdash \mathcal{D}:: u_\alpha {:}T\langle c, e \rangle)  \equiv^{\Psi_0}_{\xi} (\Delta \Vdash \mathcal{D}:: u_\alpha {:}T\langle c, e \rangle).\)} 
    \end{theorem}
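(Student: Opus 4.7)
The plan is to prove the fundamental theorem by nested induction: an outer induction on the observation index $m$, with base case $m=0$ immediate from the definition of $\mathcal{E}^\xi_{\Psi_0}\llbracket \Delta \Vdash K \rrbracket^0$ in \Cref{fig:rel_term}, which only requires both sides to be well-typed trees of the appropriate shape. For the inductive step $m=n+1$, after fixing $\xi$ and unfolding \Cref{fig:rel_eq}, I fix arbitrary well-typed low-integrity providers $\mathcal{B}_1,\mathcal{B}_2 \in \mb{L\text{-}IProvider}^\xi(\Delta)$ and low-integrity clients $\mathcal{T}_1,\mathcal{T}_2 \in \mb{L\text{-}IClient}^\xi(u_\alpha{:}T\langle c,e\rangle)$. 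By symmetry it suffices to show $(\mathcal{B}_1\mathcal{D}\mathcal{T}_1, \mathcal{B}_2\mathcal{D}\mathcal{T}_2) \in \mathcal{E}^\xi_{\Psi_0}\llbracket \Delta{\Downarrow^{\m{ig}}}\xi \Vdash u_\alpha{:}T\langle c,e\rangle{\Downarrow^{\m{ig}}}\xi \rrbracket^{n+1}$. The inner induction is on the length of the reduction $\mathcal{B}_1\mathcal{D}\mathcal{T}_1 \mapsto^{*_{\Upsilon_1;\Theta_1}} \mathcal{D}'_1$ demanded by $\mathcal{E}^\xi$, combined with a case analysis on the process term typing rules in \Cref{fig:typing-rules} applied to the process that performs the next step.

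For each reduction I must classify it according to where it occurs: strictly inside $\mathcal{B}_i$ or $\mathcal{T}_i$ (which is harmless since these sub-configurations are allowed to differ freely), strictly inside $\mathcal{D}$, or across the boundary through a low-integrity channel. Boundary steps raise the running confidentiality and integrity of some process in $\mathcal{D}$ via the premises $\latcstr{\dlabel{d_1}{f_1}=\dlabel{c}{e}\sqcup\dlabel{d_0}{f_0}}$ of the $L$-rules; the invariants (a)--(c) from \Cref{sec:term-typing} guarantee the taint is tracked consistently and the matching step on the other side is always available because $\mathcal{B}_2\mathcal{T}_2$ can supply some message (possibly after some of its own internal steps) whose carrier channel lies outside the integrity interface. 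Steps that produce an outgoing message on a channel in the integrity interface require verifying the appropriate value-interpretation clause in \Cref{fig:rel_value_left} or \Cref{fig:rel_value_right}. Because $\mathcal{D}$ is literally identical on both sides, observable labels and sent carrier channels agree outright, and for high-confidentiality-but-high-integrity channels the relatedness clause $c\sqsubseteq\xi \Rightarrow k_1=k_2$ is vacuous. After the matching communication, I appeal to the outer inductive hypothesis at index $n$ on the residual trees, invoking it compositionally across the subtrees produced by higher-order sends ($\otimes$, $\multimap$).

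The main obstacle is the case where $\mathcal{D}$ reaches a $\mb{case}$ on a channel of running confidentiality $d\not\sqsubseteq\xi$ while the running integrity $f$ also fails $d\sqsubseteq f$: here the two runs may locally receive different labels from their respective low-integrity providers and thus descend into different branches $P_i$ and $P_j$. To match their external behavior I rely on the synchronization pattern premise $\barbeq{P_i}{d_1}{f_1}{P_j}$ imposed by rules $\oplus_L$, $\extchoicesymb_R$, $\one_L$, $\chanoutsymb_L$, and $\chaninsymb_R$. I will prove (as a separate lemma, by induction on the derivation of $\barbeq{P}{d}{f}{Q}$ in \Cref{fig:sync-patterns}) that synchronized branches perform matching inputs and outputs along all channels of running integrity at most $f$, including the integrity interface, and that any unsynchronized send or spawn carries only messages of min integrity strictly above $d$, which by invariant (a) cannot propagate a dependency on $d$ to the interface. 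This lemma supplies, for every outgoing message generated by one branch, a matching multi-step reduction in the other branch that produces the required superset $\Upsilon_1\subseteq\Upsilon_2$, ensuring progress-sensitivity; after the synchronized communication the continuations are again synchronized at the updated running integrity $f\sqcup e$, so the outer induction at index $n$ closes the case.
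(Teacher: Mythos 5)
Your overall scaffolding (outer induction on $m$, closing off the low-integrity interface with arbitrary providers/clients, classifying steps as internal to $\mc{B}_i/\mc{T}_i$, internal to $\mc{D}$, or boundary-crossing) matches the paper, and you correctly identify the crux: the two runs may branch differently on a high secret and must thereafter be kept in step by the synchronization-pattern premises. But there is a genuine gap in how you propose to close that case. Your outer induction hypothesis is the reflexivity statement itself --- a claim about a \emph{single} configuration related to itself. The moment the two runs descend into different branches $P_i$ and $P_j$ with only $\barbeq{P_i}{d_1}{f_1}{P_j}$ relating them, the residual configurations are no longer identical, and "the outer induction at index $n$ closes the case" is not available: reflexivity says nothing about two \emph{distinct} but synchronized continuations. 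The same problem recurs at every subsequent divergence point and at the subtree splits for $\otimes/\multimap$, where the two split-off subtrees need not be equal. The paper resolves this by proving a strictly more general statement (\Cref{thm:ap_main}): any two configurations whose \emph{relevant/semi-relevant projections} are related --- identical relevant nodes up to renaming of high-integrity channels, semi-relevant nodes pairwise related by $\sim_{\langle d_s, e_1\rangle}$, with an explicit carve-out for \emph{dead-end} nodes --- are in the term interpretation. That binary invariant (\Cref{def:sim}) is the induction hypothesis you actually need; without stating and maintaining it, the induction does not go through.

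Relatedly, your proposed auxiliary lemma on $\barbeq{P}{d}{f}{Q}$ is pitched at the process-term level, but the matching argument has to be carried out at the \emph{configuration} level under an asynchronous semantics: one run may race ahead with unsynchronized sends and spawns whose carrier channels have min integrity above $d$, messages may be in flight on only one side, and a (semi-)relevant process can become permanently stuck waiting on an irrelevant channel in one run but not (yet) the other. The paper's Catch-up lemma (\Cref{lem:indinvariant1}) handles exactly this, and the dead-end-node machinery in \Cref{apx:def:dead-end-nodes} exists to make the invariant deliberately asymmetric (conditions on $\mc{D}_2$ are waived for dead-end nodes). Your lemma as sketched --- "synchronized branches perform matching inputs and outputs along all channels of running integrity at most $f$" --- does not account for this asymmetry or for the bookkeeping of how relevancy of nodes and channels changes across a step, which is where most of the actual proof effort lies. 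To repair the proposal: generalize the theorem to two configurations related by the projection invariant, prove preservation of that invariant under single steps of the first run (matched by zero or more steps of the second), and only then run your index induction, using the generalized IH rather than reflexivity after each interface communication.
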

\begin{proof}
   {We present a proof sketch; see the appendix for details.
    By the definition of logical equivalence (\Cref{fig:rel_eq}), we first need to close off the external low integrity ({\small$e\not \sqsubseteq \xi$}) channels in {\small$\Delta$} and {\small$u_\alpha {:}T\langle c, e \rangle$} by composing {\small$\mathcal{D}$} with arbitrary well-typed providers and clients, \respb.  We use  low integrity clients, {\small$\mathcal{T}_1$} and {\small$\mathcal{T}_2$}, and low-integrity providers, {\small$\mathcal{B}_1$} and {\small$\mathcal{B}_2$}, to compose with each run, resulting in two configurations, {\small$\mathcal{D}_1=\mathcal{B}_1\mathcal{D}\mathcal{T}_1$ and $\mathcal{D}_2=\mathcal{B}_2\mathcal{D}\mathcal{T}_2$}.
    Configurations {\small$\mathcal{D}_1$} and {\small$\mathcal{D}_2$} are both well-typed for the integrity interface: {\small\({\Psi_0; \Delta' \Vdash \mathcal{D}_i::K}\)}
    where {\small$\Delta'=\Delta {\Downarrow^{\m{ig}}} \xi$} and  {\small $K= u_\alpha {:}T\langle c,e \rangle {\Downarrow^{\m{ig}}} \xi$}. By the definition in~\Cref{fig:rel_eq}, it is enough to show that {\small$\mathcal{D}_1$} and {\small$\mathcal{D}_2$} are in the term interpretation with the integrity interface, i.e., {\small$\forall\,m.\,(\mc{D}_1, \mc{D}_2) \in \mc{E}^\xi_{\Psi_0}\llbracket {\Delta'} \Vdash {K} \rrbracket^{m}$} and {\small $\forall\,m.\,(\mc{D}_2, \mc{D}_1) \in \mc{E}^\xi_{\Psi_0}\llbracket {\Delta'} \Vdash {K} \rrbracket^{m}$}.
    We prove the former by induction on $m$; the proof of the latter is symmetric. 
Specifically, we prove a more general theorem (\Cref{thm:ap_main} in the appendix) for any {\small$\mathcal{D}_1$} and {\small$\mathcal{D}_2$} with the same observable outcome, using the notion of relevant nodes (\Cref{def:relevant-node} in the appendix).
    By the definition of the term interpretation in~\Cref{fig:rel_term}, the base case ({\small$m=0$}) is straightforward. 
    For the inductive case, following the first row of~\Cref{fig:rel_term}, 
    we assume arbitrary {\small$\Upsilon_1$}, {\small$\Theta_1$}, and {\small$\mc{D}'_1$} such that
    {\small$\mc{D}_1\mapsto^{{*}_{\Upsilon_1; \Theta_1}} \; \mc{D}'_1$}. We apply a lemma (\Cref{lem:indinvariant1} in the appendix) stating that {\small$\mc{D}_2$} can simulate the internal steps taken by {\small$\mc{D}_1$}, producing at least the same set of outgoing channels {\small$\Upsilon_2$}, i.e., {\small$\mc{D}_2\mapsto^{{*}_{\Upsilon_2}} \; \mc{D}'_2$}, such that {\small$\mc{D}'_1$} and {\small$\mc{D}'_2$} continue to have the same observable outcomes. 
    Finally, for every channel $x_\alpha$ in {\small$\Upsilon_1$} and {\small$\Theta_1$}, we case analyze on the type of $x_\alpha$, showing that {\small$\mc{D}'_1$} and {\small$\mc{D}'_2$} are in the value interpretation, with $x_\alpha$ being the focus channel. To do so, we use the induction hypothesis to establish that after the corresponding communication with the environment, the continuations of {\small$\mc{D}'_1$} and {\small$\mc{D}'_2$} are related by the term interpretation for a smaller index.
      }
\end{proof}
 \subsection{Adequacy}
 Next, we prove an adequacy theorem showing that two logically equivalent configurations are \emph{bisimilar} up to observations of confidentiality $\xi$. 
 
 For adequacy, we are interested in \emph{a confidentiality interface}, i.e., channels with observable max confidentiality {\small$c \sqsubseteq \xi$}; after all, our goal is to prove that the configurations are equivalent up to the confidentiality of an observer. 
Because the integrity interface of our logical relation is a \emph{superset} of the confidentiality interface, we need to close off those channels in the integrity interface that are of high-confidentiality ({\small$c \not \sqsubseteq \xi$}).
Note that these high-confidentiality channels are of high-integrity ({\small$e \sqsubseteq \xi$}).
To close off these channels, we compose the open configurations with high-confidentiality clients and providers, possibly different ones for each program run.
These high-integrity clients and providers are connected to the open configurations via high-integrity channels and,
as a result, may affect the observable outcome of the two runs via synchronization patterns.
We therefore require them to be logically equivalent.

Based on this idea, \Cref{fig:bisim} defines the \emph{bisimulation up to} confidentiality $\xi$ denoted as {\small$\approx_a^{\xi}$}, for two well-typed configurations: it first composes the two configurations with high-confidentiality providers ({\small$\mb{Hrel\text{-}IProvider}$}) and clients ({\small$\mb{H\text{-}CClient}$}), while insisting that the high-integrity parts of the providers and clients are logically equivalent (using the relations {\small$\mb{Hrel\text{-}IProvider}$} and {\small$\mb{Hrel\text{-}IClient}$}). Then it invokes an asynchronous bisimulation {\small$\approx_a$} on the compositions.
The definition uses a projection function {\small${\Downarrow^{\m{cf}}} \xi$} to build the confidentiality interface, e.g., {\small$\Delta {\Downarrow^{\m{cf}}} \xi$} projects out the channels in {\small$\Delta$} with confidentiality {\small$c \not \sqsubseteq \xi$}. 

The \emph{asynchronous bisimulation} {\small$\approx_a$} invoked by the definition in~\Cref{fig:bisim} uses a labeled transition system (LTS) following the standard definition of asynchronous bisimulation~\cite{SangiorgiWalkerBook2001}. The relation {\small$\mc{D}_1 \approx_a \mc{D}_2$}  states that every internal step or external action of {\small$\mc{D}_1$} can be (weakly) simulated by {\small$\mc{D}_2$} and vice-versa.  For example, when {\small$\mc{D}_1$} takes an action by sending output {\small$q$}  via an external channel {\small$x_\alpha$}, i.e., {\small$\mc{D}_1 \xrightarrow{\overline{x_\alpha}\,q} \mc{D}'_1$}, the bisimulation ensures that for some  {\small$\mc{D}'_2$}, we have {\small$\mc{D}_2 \xRightarrow{\overline{x_\alpha}\,q} \mc{D}'_2$} and {\small$\mc{D}'_1 \approx_a \mc{D}'_2$}.
Here, {\small$\xRightarrow{\overline{x_\alpha}\,q}$} stands for taking zero or more internal steps before outputting {\small$q$} along {\small$x_\alpha$}. The full definition of bisimulation is in the appendix.

 \begin{figure*}
{\small  
\[\begin{array}{ll}
  \Delta_1 \Vdash \mc{D}_1 :: x_\alpha {:}A_1\langle c_1 , e_1 \rangle \approx^\xi_a \Delta_2 \Vdash \mc{D}_2:: y_\beta {:}A_2\langle c_2 , e_2 \rangle\,\,\m{iff}\\
 \,\, \mc{D}_1\in \m{Tree}({\Delta_1} \Vdash x_\alpha {:}A_1\langle c_1 , e_1 \rangle) \,\m{and}\, \mc{D}_2\in \m{Tree}({\Delta_2} \Vdash y_\beta {:}A_2\langle c_2 , e_2 \rangle)\, \m{and}
  \,\,\\\,\, \Delta= \Delta_1 \Downarrow^{\m{cf}} \xi= \Delta_2  \Downarrow^{\m{cf}} \xi\,\,\,\m{and}\,\,\, K=y_\beta {:}A_2\langle c_2 , e_2 \rangle \Downarrow^{\m{cf}} \xi =  x_\alpha {:}A_1\langle c_1 , e_1 \rangle \Downarrow^{\m{cf}} \xi\,\, \,\m{and}\,\\ 
  \,\, \Delta'= \Delta_1 \Downarrow^{\m{ig}} \xi= \Delta_2 \Downarrow^{\m{ig}} \xi\,\,\,\m{and}\,\,\, K'=y_\beta {:}A_2\langle c_2 , e_2 \rangle \Downarrow^{\m{ig}} \xi =  x_\alpha {:}A_1\langle c_1 , e_1 \rangle \Downarrow^{\m{ig}} \xi\,\, \,\m{and}\,\\
  \, \, \forall \mc{B}_1 \in \mb{H\text{-}CProvider}^\xi(\Delta_1), \mc{B}_2\in \mb{H\text{-}CProvider}^\xi(\Delta_2).\\
\,\,\forall \mc{T}_1 \in \mb{H\text{-}CClient}^\xi(x_\alpha{:}A_1\langle c_1 , e_1 \rangle),\, \mc{T}_2 \in \mb{H\text{-}CClient}^\xi(y_\beta{:}A_2\langle c_2 , e_2 \rangle).\\
\,\,\m{if} (\mc{B}_1, \mc{B}_2)\in \mb{Hrel\text{-}IProvider}^\xi(\Delta'\backslash \Delta)\, \m{and}\,(\mc{T}_1, \mc{T}_2)\in \mb{Hrel\text{-}IClient}^\xi(K' \backslash K)\, \m{then}\,\mc{B}_1\mc{D}_1\mc{T}_1 \approx_a \mc{B}_2\mc{D}_2\mc{T}_2.
\end{array}
\]
\[\begin{array}{lll}
      \cdot \,\,\in \mb{H\text{-}CProvider}^\xi(\cdot) \\
      \mc{B} \in \mb{H\text{-}CProvider}^{\xi}(\Delta, x_\alpha{:}A\langle c, e \rangle )   \, \,\;\m{iff}\\ 
      \quad c \not \sqsubseteq \xi  \, \m{and}\,
      \mc{B} = \mc{B}' \mc{T} \,\, \m{and}\,\, \mc{B}' \in \mb{H\text{-}CProvider}^\xi(\Delta)\,\,\m{and}\,\, \mc{T} \in \m{Tree}(\cdot \Vdash x_\alpha{:}A\langle c, e \rangle), \mb{or}\\
      \quad c \sqsubseteq \xi  \, \m{and}\,
     \mc{B} \in \mb{H\text{-}CProvider}^\xi(\Delta)\\[5pt]
      \mc{T} \in \mb{H\text{-}CClient}^\xi( x_\alpha{:}A\langle c, e \rangle )\,  \, \,\; \m{iff} \\
      \quad c \not \sqsubseteq \xi  \, \m{and}\, \mc{T} \in \m{Tree}(x_\alpha{:}A\langle c, e \rangle \Vdash \_:1\langle \top, \top\rangle), \mb{or}\\
      \quad c \sqsubseteq \xi  \, \m{and}\, \mc{T}= \cdot\\
    \end{array}\]
    \[\begin{array}{lll}
      (\cdot, \cdot) \,\,\in \mb{Hrel\text{-}IProvider}^\xi(\cdot) \\
      (\mc{B}_1, \mc{B}_2) \in \mb{Hrel\text{-}IProvider}^{\xi}(\Delta, x_\alpha{:}A\langle c, e \rangle )  \;\;\, \m{iff} \\
      \quad e \not \sqsubseteq \xi  \, \m{and}\,
      \mc{B}_i = \mc{B}'_i \mc{T}_i \,\, \m{and}\,\, (\mc{B}'_1,\mc{B}'_2) \in \mb{Hrel\text{-}IProvider}^\xi(\Delta), \mb{or}\\
      \quad e \sqsubseteq \xi  \, \m{and} \mc{B}_i = \mc{B}'_i \mc{T}_i \,\, \m{and}\,\, (\mc{B}'_1,\mc{B}'_2) \in \mb{Hrel\text{-}IProvider}^\xi(\Delta)\,\,\m{and}\,\, \cdot \Vdash\mc{T}_1 \equiv_\xi^{\Psi_0} \mc{T}_2:: x_\alpha{:}A\langle c, e \rangle.\\[5pt]
      (\cdot, \cdot) \in \mb{Hrel\text{-}IClient}^\xi( \_\langle \top, \top \rangle )\\
      (\mc{T}_1, \mc{T}_2) \in \mb{Hrel\text{-}IClient}^\xi( x_\alpha{:}A\langle c, e \rangle )\;\;\, \m{iff} \\
      \quad e \not \sqsubseteq \xi\,\, \mb{or}\,\,  e \sqsubseteq \xi  \, \m{and}\, x_\alpha{:}A\langle c, e \rangle  \Vdash \mc{T}_1 \equiv_{\xi}^{\Psi_0} \mc{T}_2 :: \_:1\langle \top, \top\rangle
    \end{array}\]

}
   \vspace{-10pt}
   \caption{Asynchronous bisimulation up to observations of confidentiality $\xi$.}
   \vspace{-12pt}
    \label{fig:bisim}
    \end{figure*}

Now we are ready to present our adequacy theorem stating that, given an observer level $\xi$, logically equivalent configurations are bisimilar up to observations of confidentiality $\xi$.
The proof of the theorem relies on a compositionality lemma, which ensures a harmony between asserts and assumes in the value-interpretation of the logical relation.
\begin{lemma}[Compositionality]\label{lem:compose}
 {\small \(
  \forall m.\, (\mc{D}_1; \mc{D}_2) \in \mc{E}_{\Psi_0}^\xi\llbracket \Delta, u_\alpha^{\langle c,e \rangle}{:}T \Vdash K \rrbracket^m\, \m{and}\,
    \forall m.\, (\mc{T}_1; \mc{T}_2) \in \mc{E}_{\Psi_0}^\xi\llbracket \Delta'\Vdash  u_\alpha^{\langle c,e \rangle}{:}T \rrbracket^m\,\,\)}  if and only if
  {\small\, \(\forall k.\, (\mc{T}_1 \mc{D}_1; \mc{T}_2\mc{D}_2) \in \mc{E}_{\Psi_0}^\xi\llbracket \Delta', \Delta\Vdash K \rrbracket^k.
  \)}
  
  \end{lemma}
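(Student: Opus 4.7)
The plan is to prove the two implications of this biconditional separately, each by strong induction on the step index, relying on the term interpretation in Figure~\ref{fig:rel_term} and the value interpretations in Figures~\ref{fig:rel_value_left}--\ref{fig:rel_value_right}.

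For the forward direction ($\Rightarrow$), I would induct on $k$. The base case $k=0$ reduces to well-typedness of the composition, which is immediate from rule $\mathbf{comp}$. For the inductive case, suppose $\mc{T}_1 \mc{D}_1 \mapsto^{*_{\Upsilon_1;\Theta_1}} \mc{C}'_1$. The key observation is that every internal step of the composition is either (i) an internal step of $\mc{T}_1$ alone not involving $u_\alpha$, (ii) an internal step of $\mc{D}_1$ alone not involving $u_\alpha$, or (iii) a send/receive pair along $u_\alpha$ (where one side spawns a message and the other consumes it). For steps of type (i) and (ii), I would use the respective term-interpretation hypotheses for $\mc{T}$ and $\mc{D}$ to produce matching step sequences in $\mc{T}_2$ and $\mc{D}_2$. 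For each $u_\alpha$-communication I would invoke the value interpretation on the sending side, which provides the outgoing message of $u_\alpha$ together with relatedness of continuations at a smaller index, and then feed the symmetric message on the other side, whose value interpretation \emph{assumes} a related incoming message. The inner induction hypothesis at the smaller index recombines the surviving pieces into the required term-interpretation membership for $\mc{T}_2 \mc{D}_2$. The focus-channel subscripts in $\mc{V}$ must be threaded carefully: an observable action of the composition on a channel in $\Upsilon_1$ or $\Theta_1$ (necessarily outside $u_\alpha$) is dispatched by the corresponding clause of $\mc{V}$ for whichever individual relation owns that external channel.

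For the backward direction ($\Leftarrow$), I would show the two conjuncts $(\mc{D}_1;\mc{D}_2)$ and $(\mc{T}_1;\mc{T}_2)$ separately, again by induction on the index. For $(\mc{D}_1;\mc{D}_2)$, internal steps of $\mc{D}_1$ are also internal steps of the composition, and outgoing messages along $\Delta \cup \{K\}$ surface directly in the composed system's observables. The interesting case is an action of $\mc{D}_1$ on $u_\alpha$: this is not observable in the composition, so I would close off $u_\alpha$ by composing with an arbitrary $\mc{T}_1, \mc{T}_2$ that are in $\mc{E}^\xi_{\Psi_0}\llbracket \Delta' \Vdash u_\alpha{:}T\rrbracket$ (e.g., self-related witnesses provided by the fundamental theorem \Cref{lem:reflexivity}); the $u_\alpha$-action of $\mc{D}_1$ pairs with a dual action of $\mc{T}_1$ to become an internal step of $\mc{T}_1 \mc{D}_1$, and applying the composed hypothesis then yields matching behavior from which the relatedness of $\mc{D}_2$'s action is read off by inverting the appropriate $\mc{V}$ clause. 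The $(\mc{T}_1;\mc{T}_2)$ case is symmetric, closing off $u_\alpha$ on the right.

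The main obstacle I expect is the backward direction, specifically the step-index bookkeeping: invocations of the composed term relation at index $k$ must translate into invocations of each individual term relation at appropriately smaller $m$, and the closure by an external witness must be performed at indices large enough to accommodate both the future behavior along $u_\alpha$ and the behavior observable on the remaining external interface. A secondary subtlety concerns higher-order types ($\otimes$ and $\multimap$), where a subtree is transported across $u_\alpha$: the decomposition into $\mc{T}\mc{D}$ must be redone against the shifted interface, and clauses $l_5$ and $r_4$ must be invoked so that the transported subtree becomes the witness for the term-interpretation on the appropriate side.
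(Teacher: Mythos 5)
Your forward direction is essentially the paper's argument. The paper obtains it as a corollary of a more general $n$-ary composition lemma (\Cref{lem:puttogether}), which performs exactly the nested induction you describe: an outer induction on the observation index, an inner induction on the number of steps, and a case split on whether the first step is internal to one component or a communication across the shared channel $u_\alpha$, the latter handled by pairing the sending side's value-interpretation clause with the receiving side's. The one thing you underestimate is the bookkeeping you yourself flag as an obstacle: the paper needs dedicated auxiliary lemmas (forward closure on the first run, backward closure on the second run, construction of a minimal sending configuration, and a lemma commuting the existential over the universal quantifier on $m$) to reassemble the two components at compatible indices after each cross-component exchange. Your sketch names the difficulty but does not supply these tools.

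The backward direction is where your proposal genuinely departs from the paper, and where it has a gap. The paper's proof of that direction is a degenerate instantiation: take $\mc{T}_i = \cdot$ and $\Delta' = u_\alpha{:}T$, so that the composed relation literally coincides with the relation for $\mc{D}_i$. Your plan instead closes off $u_\alpha$ with related witnesses and then tries to read off the relatedness of $\mc{D}_2$'s $u_\alpha$-action ``by inverting the appropriate $\mc{V}$ clause.'' This step fails: once $u_\alpha$ is closed off it is no longer in the interface $\Delta', \Delta \Vdash K$ of the composition, so the composed term interpretation never invokes a value-interpretation clause at $u_\alpha$ --- all $u_\alpha$-communication is absorbed into internal $\mapsto$ steps, and there is nothing to invert. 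To establish $(\mc{D}_1;\mc{D}_2) \in \mc{E}_{\Psi_0}^\xi\llbracket \Delta, u_\alpha^{\langle c,e\rangle}{:}T \Vdash K \rrbracket^m$ you must exhibit the value-interpretation clauses for $u_\alpha$ directly, and the composed hypothesis for fixed, nontrivial $\mc{T}_1,\mc{T}_2$ does not supply them. A secondary issue is that \Cref{lem:reflexivity} yields self-relatedness of a single configuration under further closing-off, not an arbitrary related pair packaged as a term-interpretation witness at the interface $\Delta' \Vdash u_\alpha{:}T$ that your argument needs. The paper sidesteps all of this by only ever using the right-to-left direction in the trivial instantiation.
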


\begin{theorem}[Adequacy]\label{thm:adeq}
    If 
    {\small$(\Delta_1 \Vdash \mc{D}_1:: x_\alpha {:}A_1\langle c_1, e_1 \rangle) \equiv^{\Psi_0}_{\xi} (\Delta_2 \Vdash \mc{D}_2:: y_\beta {:}A_2\langle c_2, e_2 \rangle)$} then {\small$(\Delta_1 \Vdash \mc{D}_1:: x_\alpha {:}A_1\langle c_1, e_1 \rangle) \approx_a^{\xi} (\Delta_2 \Vdash \mc{D}_2:: y_\beta {:}A_2\langle c_2, e_2 \rangle)$}.
    
\end{theorem}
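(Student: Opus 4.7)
The plan is to reduce adequacy to two ingredients: Lemma~\ref{lem:compose} (Compositionality) and a bridging claim saying that logical equivalence on a \emph{pure confidentiality interface} --- one where every channel satisfies $c \sqsubseteq \xi$ --- already entails asynchronous bisimilarity.

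First I would fix arbitrary $\mc{B}_1, \mc{B}_2 \in \mb{H\text{-}CProvider}^\xi(\cdot)$ and $\mc{T}_1, \mc{T}_2 \in \mb{H\text{-}CClient}^\xi(\cdot)$ satisfying the premises of $\approx^\xi_a$ in Figure~\ref{fig:bisim}, and aim to show $\mc{B}_1 \mc{D}_1 \mc{T}_1 \approx_a \mc{B}_2 \mc{D}_2 \mc{T}_2$. I would decompose each $\mc{B}_i$ (and dually each $\mc{T}_i$) into a high-integrity part $\mc{B}^{\m{hi}}_i$ closing off high-confidentiality \emph{high-integrity} channels, which is pairwise logically equivalent to the other run by $\mb{Hrel\text{-}IProvider}^\xi$, plus a low-integrity part $\mc{B}^{\m{li}}_i$ closing off the remaining (high-confidentiality, low-integrity) channels, which is arbitrary. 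Invariant (c) on the typing judgment ($e \sqsubseteq c$) guarantees that these two pieces cover exactly the complement of the confidentiality interface.

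Next, I would fuse these pieces via compositionality. The low-integrity parts are valid $\mb{L\text{-}IProvider}/\mb{L\text{-}IClient}$, so unfolding $\mc{D}_1 \equiv^{\Psi_0}_\xi \mc{D}_2$ places $(\mc{B}^{\m{li}}_1 \mc{D}_1 \mc{T}^{\m{li}}_1, \mc{B}^{\m{li}}_2 \mc{D}_2 \mc{T}^{\m{li}}_2)$ in $\mc{E}^\xi_{\Psi_0}$ on the integrity interface for every $m$. The high-integrity parts are logically equivalent by assumption and, having no residual low-integrity channels to close, unfold directly to term-interpretation membership. Iterated application of Lemma~\ref{lem:compose} then yields
\[
\forall m.\, (\mc{B}_1 \mc{D}_1 \mc{T}_1, \mc{B}_2 \mc{D}_2 \mc{T}_2) \in \mc{E}^\xi_{\Psi_0}\llbracket \Delta \Vdash K \rrbracket^m
\]
on the pure confidentiality interface $\Delta \Vdash K$. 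The bridging claim is then proved by coinduction: the symmetric relation $R$ consisting of pairs $(\mc{C}_1, \mc{C}_2)$ lying in $\mc{E}^\xi_{\Psi_0}\llbracket \Delta \Vdash K \rrbracket^m$ for every $m$, on an interface with $c \sqsubseteq \xi$ throughout, forms an asynchronous bisimulation. When $\mc{C}_1 \xRightarrow{\overline{x_\alpha}\,q} \mc{C}'_1$, the term interpretation makes $\mc{C}_2$ step internally until $x_\alpha$ is outgoing; the relevant value-interpretation clause (e.g.~$r_2$ or $l_3$ in \Cref{fig:rel_value_left} and \Cref{fig:rel_value_right}) forces the emitted label to coincide with $q$ because $c \sqsubseteq \xi$ unconditionally triggers $k_1 = k_2$, and recursively places the continuations in $\mc{E}$ at smaller indices, so $(\mc{C}'_1, \mc{C}'_2) \in R$. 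Input actions are handled dually using the \emph{assumes} in the right/left value clauses for receives.

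The main obstacle is this last step: bisimulation is a coinductive object, while the term interpretation is only step-indexed by the number of observations $m$. One must verify that the quantification ``for all $m$'' is preserved across every value-interpretation unfolding, so that continuations remain in $\mc{E}$ at every depth and thus stay inside $R$. A secondary subtlety is the focus-channel treatment: the term interpretation only requires $\Upsilon_1 \subseteq \Upsilon_2$, so one must argue that any extra outgoing readiness in $\mc{C}_2$ corresponds to asynchronous sends that can be deferred without disturbing the one-at-a-time matching of $\mc{C}_1$'s visible actions demanded by $\approx_a$.
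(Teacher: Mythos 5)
Your Step 1 coincides with the paper's: both decompositions split the high-confidentiality providers/clients into a low-integrity part (absorbed by unfolding $\equiv^{\Psi_0}_{\xi}$ into the integrity-interface term relation) and a pairwise logically equivalent high-integrity part, and both fuse them with Lemma~\ref{lem:compose} to land in $\mc{E}^\xi_{\Psi_0}\llbracket \Delta^c \Vdash K^c\rrbracket^m$ for all $m$ on the pure confidentiality interface. Where you diverge is the bridge from that term-interpretation membership to $\approx_a$. You propose to prove it directly by coinduction, exhibiting the ``in $\mc{E}$ at every index $m$'' relation as an asynchronous bisimulation. The paper instead observes that on an interface where every channel satisfies $c \sqsubseteq \xi$ its integrity logical relation collapses to the confidentiality logical relation of Balzer et al.~\cite{BalzerARXIV2023} (because the guard $c \sqsubseteq \xi \rightarrow k_1 = k_2$ fires unconditionally), and then imports that work's soundness theorem for asynchronous bisimulation wholesale. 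Your route is more self-contained but shoulders exactly the burden the paper delegates: the two obstacles you flag --- preserving the $\forall m$ quantification across value-interpretation unfoldings, and deferring the surplus outgoing readiness permitted by $\Upsilon_1 \subseteq \Upsilon_2$ --- are the substance of the imported soundness proof, and the first of them is discharged in this paper's own appendix by the quantifier-commutation lemma (\Cref{apx:lem:moving-existential}), which you would need to invoke (or reprove) to close your coinductive step. Neither approach is wrong; the paper's buys brevity at the cost of an external dependency plus the coincidence lemma, while yours buys independence at the cost of reproving a known adequacy-of-step-indexed-relations argument.
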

\begin{proof}
Recall from~\Cref{fig:bisim} that {\small$\approx_a^{\xi}$} composes {\small$\mc{D}_1$} and {\small$\mc{D}_2$} with arbitrary high-confidentiality ({\small$c \not\sqsubseteq \xi$}) clients and providers, building a confidentiality interface {\small$\llbracket\Delta^c \Vdash K^c\rrbracket$}. 
Let us call the high-confidentiality providers {\small$\mc{B}_1$} and {\small$\mc{B}_2$} and the high-confidentiality clients {\small$\mc{T}_1$} and {\small$\mc{T}_2$}.
We can partition the providers into high-integrity and low-integrity parts to get {\small$\mc{B}_1=\mc{B}^{\mi{HI}}_i\mc{B}^{\mi{LI}}_i$} (similarly for the clients {\small$\mc{T}_1=\mc{T}^{\mi{HI}}_i\mc{T}^{\mi{LI}}_i$}), where superscripts {\small$\mi{HI}$} and {\small$\mi{LI}$} correspond to high-integrity and low-integrtiy parts, resp.
Our goal is to prove {\small$\mc{B}^{HI}_1\mc{B}^{LI}_1\mc{D}_1 \mc{T}^{HI}_1\mc{T}^{LI}_1 \approx_a \mc{B}^{HI}_2\mc{B}^{LI}_2\mc{D}_2 \mc{T}^{HI}_2\mc{T}^{LI}_2$}.

{\bf Step 1.}
The first step is to show that the two compositions are related by the term interpretation as well, i.e., {\small$ \forall m. (\mc{B}^{\mi{HI}}_1\mc{B}^{\mi{LI}}_1\mc{D}_1 \mc{T}^{\mi{HI}}_1\mc{T}^{\mi{LI}}_1; \mc{B}^{\mi{HI}}_2\mc{B}^{\mi{LI}}_2\mc{D}_2 \mc{T}^{\mi{HI}}_2\mc{T}^{\mi{LI}}_2) \in \mathcal{E}\llbracket\Delta^c \Vdash K^c\rrbracket$}.
To do so, we can use the definition from~\Cref{fig:rel_eq} for $\equiv^{\Psi_0}_{\xi}$ to compose $\mc{D}_1$ and $\mc{D}_2$ with given low integrity clients and providers {\small$\mc{T}^{\mi{LI}}_1$, $\mc{T}^{\mi{LI}}_2$, $\mc{B}^{\mi{LI}}_1$}, and {\small$\mc{B}^{\mi{LI}}_2$} to build the integrity interface {\small$\llbracket\Delta^i \Vdash K^i\rrbracket$} and get {\small$ \forall m. (\mc{B}^{\mi{LI}}_1\mc{D}_1 \mc{T}^{\mi{LI}}_1; \mc{B}^{\mi{LI}}_2\mc{D}_2 \mc{T}^{\mi{LI}}_2) \in \mathcal{E}\llbracket\Delta^i \Vdash K^i\rrbracket$}.
However, this is not enough to achieve our goal as the relation pertains to the integrity interface, and thus, the composition only includes the low integrity providers/clients.
To build the confidentiality interface and include the high integrity parts, we use the fact that the high integrity providers  {\small$\mc{B}^{\mi{HI}}_1$ and $\mc{B}^{\mi{HI}}_2$} (and clients {\small$\mc{T}^{\mi{HI}}_1$} and {\small$\mc{T}^{\mi{HI}}_2$}) are themselves logically equivalent. 
We use our compositionality lemma (\Cref{lem:compose}) to compose the high-integrity channels in the integrity interface with these providers/clients and show that the composition results in two logically equivalent configurations, i.e., {\small$ \forall m. (\mc{B}^{\mi{HI}}_1\mc{B}^{\mi{LI}}_1\mc{D}_1 \mc{T}^{\mi{HI}}_1\mc{T}^{\mi{LI}}_1; \mc{B}^{\mi{HI}}_2\mc{B}^{\mi{LI}}_2\mc{D}_2 \mc{T}^{\mi{HI}}_2\mc{T}^{\mi{LI}}_2) \in \mathcal{E}\llbracket\Delta^c \Vdash K^c\rrbracket$}.

{\bf Step 2.} We complete the proof by connecting our logically related configurations to an observational equivalence relation for session types~\cite{BalzerARXIV2023}, which is proved sound and complete for asynchronous bisimulation. 
We first show that our integrity term interpretation implies the observational equivalence relation  in~\cite{BalzerARXIV2023} when we consider a confidentiality interface, and then use their soundness result to show that the integrity term interpretation {\small$ \forall m. (\mc{B}^{\mi{HI}}_1\mc{B}^{\mi{LI}}_1\mc{D}_1 \mc{T}^{\mi{HI}}_1\mc{T}^{\mi{LI}}_1; \mc{B}^{\mi{HI}}_2\mc{B}^{\mi{LI}}_2\mc{D}_2 \mc{T}^{\mi{HI}}_2\mc{T}^{\mi{LI}}_2) \in \mathcal{E}\llbracket\Delta^c \Vdash K^c\rrbracket$} implies bisimilarity {\small$\mc{B}^{HI}_1\mc{B}^{LI}_1\mc{D}_1 \mc{T}^{HI}_1\mc{T}^{LI}_1 \approx_a \mc{B}^{HI}_2\mc{B}^{LI}_2\mc{D}_2 \mc{T}^{HI}_2\mc{T}^{LI}_2$}.

Next, we briefly explain how our integrity logical relation coincides with the observational equivalence relation (for well-typed configurations) in~\cite{BalzerARXIV2023} when considering a confidentiality interface.
The observational equivalence in~\cite{BalzerARXIV2023} is defined via a logical relation similar to the one developed in this paper, but only considering the confidentiality interface. Let us call the term and value interpretations of our logical relation {\small$\mathcal{E}^i$ and $\mathcal{V}^i$} (defined in~\Cref{fig:rel_term,fig:rel_value_left,fig:rel_value_right}) and the ones defined in~\cite{BalzerARXIV2023} {\small$\mathcal{E}^c$} and {\small$\mathcal{V}^c$}, resp.
The relation {\small $\mathcal{E}^i$} is invoked for the integrity interface {\small$\llbracket\Delta^i \Vdash K^i\rrbracket$} in~\Cref{fig:rel_eq}, and similarly {\small$\mathcal{E}^c$} is invoked for the confidentiality interface {\small$\llbracket\Delta^c \Vdash K^c\rrbracket$}, where, by definition, {\small$\Delta^c$} is a subset of (or equal to) {\small$\Delta^i$} and {\small$K^c$} is a subset of (or equal to) {\small$K^i$}.
As the integrity logical relation may contain non-observable channels ({\small$c \not\sqsubseteq \xi$}), it only insists that the same labels are sent when communication is along observable channels.
Concretely, {\small$\mathcal{V}^i$} in lines {\small$(l_2)$}, {\small$(l_3)$}, {\small$(r_2)$}, and {\small$(r_3)$} only insists that the labels {\small$k_1$} and {\small$k_2$} sent/received along a channel are the same if {\small$c \sqsubseteq \xi$}.
However, {\small$\mathcal{V}^c$} always enforces sending the same labels, since a priori the condition {\small$c \sqsubseteq \xi$} holds for all the channels in its interface. 
In all other regards, {\small$\mathcal{E}^i$} and {\small$\mathcal{V}^i$} have the same definition as {\small$\mathcal{E}^c$} and {\small$\mathcal{V}^c$}.
As a result, it is straightforward to observe that given an interface {\small$\Delta^c \Vdash K^c$} with only observable channels (channels with {\small$c\sqsubseteq \xi$}), we have {\small$ \forall m.\,(\mathcal{D}_1;\mathcal{D}_2) \in \mathcal{E}^i\llbracket\Delta^c \Vdash K^c\rrbracket^m$} iff {\small$\forall m.\,(\mathcal{D}_1;\mathcal{D}_2) \in \mathcal{E}^c\llbracket\Delta^c \Vdash K^c\rrbracket^m$}.
\end{proof}

\section{Related work}
\label{sec:related}
{\bf IFC type systems using linearity.}
Conceptually most closely related to our work is the work by Zdancewic and Myers
on \emph{ordered linear continuations} \cite{ZdancewicMyersESOP2001,ZdancewicMyersARTICLE2002}.
The authors consider continuation-passing style (CPS) security-typed languages
to verify noninterference not only for source-level programs but also compiled code.
The authors observe that the possibility to lower the \textsf{pc} label upon exiting control flow constructs,
present in imperative source-level languages,
is no longer available in a CPS language.
To rectify the loss of flexibility they introduce ordered linear continuations.
Similar to our pattern checks, ordered linear continuations allow downgrading of the \textsf{pc} label after branching on high,
because linearity enforces the continuations to be used in every branch, in the order prescribed.
In contrast to our work, the authors only consider a sequential language and only prove PINI.
Our work moreover establishes the connection to integrity,
facilitating regrading policies that are polymorphic in the security lattice for ultimate flexibility.

In another line of work Zdancewic and Myers again employ linearity
for increased flexibility and a stronger noninterference statement \cite{ZdancewicMyersCSFW2003}.
The authors consider a concurrent language with a store and first-class channels.
Their main focus is observational determinism, ensuring immunity to internal timing attacks and
attacks that exploit information about thread scheduling.
To this end the authors introduce linear channels and a race freedom analysis.
Given that $\lang$ enjoys confluence, like other linear session type languages,
it rules out timing attacks that exploit the relative order of messages,
which seems to be a stronger property than immunity to internal timing attacks considered by the authors.
Moreover, we establish PSNI for $\lang$, whereas the authors only prove PINI.

{\bf IFC session type systems.}
In terms of underlying language, the work most closely related to ours is the one by Derakhshan et al. \cite{DerakhshanLICS2021,BalzerARXIV2023}.
The authors develop an IFC type system for the same family of linear session types but only consider confidentiality.
Their system annotates the process term judgments with running and max confidentiality. 
Their typing rules only ensure that the running confidentiality (aka taint level) is updated correctly after each receive and that a tainted process does not leak information via send.
In particular, the rules do not allow decreasing the taint level at any point.
As a result the authors' type system suffers from the same restrictiveness as other IFC type systems for concurrent languages,
requiring each loop iteration to run at the maximal confidentiality reached throughout an arbitrary iteration.
For example, the authors' IFC type system rejects the banking example in~\Cref{sec:bank-examples}: as soon as the bank receives a message from one customer, say Alice, it will be tainted and cannot send a message to any other customer, say Bob.
In fact, the authors' IFC type system rejects all well-typed examples presented in this paper even though they enjoy PSNI. 
We make our IFC type system more 
flexible 
 by designing synchronization policies to enable regrading of the taint level and using integrity labels to make the policies composable, both of which are novel to our system. 
Designing these composable policies was an intricate task, particularly due to dealing with both concurrency and general recursion.

Our logical relation for integrity is inspired by Balzer et al.'s~\cite{BalzerARXIV2023} logical relation for equivalence.
The logical relation for equivalence is defined based on the confidentiality interface. Our logical relation, however, is based on the larger integrity interface to enable the proof of the fundamental theorem.
We prove our adequacy theorem by proving compositionality for our logical relation,
which then allows us to recast our logical relation in terms of the logical relation for equivalence by the authors,
delivering adequacy as a corollary.

{\bf IFC type systems for multiparty session types and process calculi.}
IFC type systems have also been explored for multiparty session types
\cite{CapecchiCONCUR2010,CapecchiARTICLE2014,CastellaniARTICLE2016,Ciancaglini2016}.
These works explore declassification \cite{CapecchiCONCUR2010,CapecchiARTICLE2014} and
flexible runtime monitoring techniques \cite{CastellaniARTICLE2016,Ciancaglini2016}.
Our work not only differs in use of session type paradigm (\ie binary vs. multiparty)
but also in use of a logical relation for showing noninterference.
Our work is more distantly related with IFC type systems for process calculi
\cite{HondaESOP2000,HondaYoshidaPOPL2002,CrafaARTICLE2002,CrafaTGC2005,CrafaARTICLE2007,HennessyARTICLE2005,HennessyARTICLE2005,KOBAYASHI2005,ZdancewicMyersCSFW2003,PottierCSFW2002}.
These works prevent information leakage by associating a security label with channels/types/actions \cite{HondaYoshidaPOPL2002}, read/write policies with channels \cite{HennessyARTICLE2005,HennessyARTICLE2005}, or capabilities with expressions \cite{CrafaARTICLE2002}.
Honda et al. \cite{HondaYoshidaPOPL2002} also use a substructural type system and
prove a sound embedding of Dependency Core Calculus (DCC) \cite{AbadiPOPL1999} into their calculus.
Our work sets itself apart in its use of session types and meta theoretic developments based on logical relations.
{Moreover, our IFC type system 
is more permissive as it allows for regrading of the taint level, while preserving noninterference.}

{\bf Declassification.}
Our notion of regrading may seem related to declassification,
which has extensively been studied for IFC type systems for functional and imperative languages
\cite{FerrariSP1997,AbadiTACS1997,MyersLiskovTOSEM2000,ZdancewicMyersCSFW2001,ZdancewicARTICLE2002,ZdancewicMFPS2003,MyersARTICLE2006,ChongMyersCSFW2006,AskarovMyersARTICLE2011}
and allows an entity to downgrade its level of confidentiality.
However, our work significantly differs from declassification as it preserves PSNI,
whereas declassification systems \emph{deliberately} release information and thus intentionally \emph{weaken} noninterference.

In particular, robust declassification \cite{ZdancewicMyersCSFW2001,ZdancewicARTICLE2002,ZdancewicMFPS2003,MyersARTICLE2006,ChongMyersCSFW2006,AskarovMyersARTICLE2011} prevents adversaries from exploiting downgrading of confidentiality, by complementing confidentiality with integrity.
It uses integrity to ensure that downgrading decisions can be trusted, \ie cannot be influenced by an attacker.
As such, only high-integrity data can influence the taint level to be lowered. 
This is similar to our system, where the higher the integrity of a process, the lower level it can regrade the taint level.
The difference, however, is that we enforce extra synchronization policies on our high-integrity processes to ensure that they cannot induce information leaks by lowering the taint level. 
This contrasts with work on robust declassification, which introduces leakage intentionally and thus compromises noninterference.

\bibliography{main-arxiv}
\input{appendix-arxiv/apx-main-arxiv.tex}
\end{document}